\newcommand{\comment}[1]{}
\newtheorem{theorem}{Theorem}[section]
\newtheorem{lemma}[theorem]{Lemma}
\newtheorem{proposition}[theorem]{Proposition}
\newcommand{\tabhead}{\sc }
\newcommand{\algo}[1]{{\textsc{#1}}}
\newcommand{\ignore}[1]{}
\def\argmin{\mathop{\operator@font argmin}}
\begin{document}

\title{An Experimental Study of Dynamic Dominators\thanks{A preliminary version of this paper appeared in the \emph{Proceedings of the 20th Annual European Symposium on Algorithms}, pages 491--502, 2012.}}

\author{Loukas Georgiadis$^{1}$
\and Giuseppe F. Italiano$^{2}$
\and Luigi Laura$^{3}$
\and Federico Santaroni$^{2}$
}

\footnotetext[1]{Department of Computer Science \& Engineering, University of Ioannina, Greece. E-mail: \texttt{loukas@cs.uoi.gr}.}
\footnotetext[2]{Dipartimento di Ingegneria Civile e Ingegneria Informatica, Universit\`a di Roma ``Tor Vergata'', Roma, Italy. E-mail: \texttt{giuseppe.italiano@uniroma2.it, santaroni@uniroma2.it}.}
\footnotetext[3]{Dipartimento di Ingegneria Informatica, Automatica e Gestionale e Centro di Ricerca per il Trasporto e la Logistica (CTL), ``Sapienza'' Universit\`a di Roma, Roma, Italy. E-mail: \texttt{laura@dis.uniroma1.it}.}

\maketitle

\begin{abstract}
Motivated by recent applications of dominator computations, we consider the problem of dynamically maintaining the dominators of  flow graphs through a sequence of insertions and deletions of edges.
Our main theoretical contribution is a simple incremental algorithm that maintains the dominator tree of a flow graph with $n$ vertices through a sequence of $k$ edge insertions in $O(m\min\{n,k\}+kn)$ time, where $m$ is the total number of edges after all insertions.
Moreover, we can test in constant time if a vertex $u$ dominates a vertex $v$, for any pair of query vertices $u$ and $v$.
Next, we present a new decremental algorithm to update a dominator tree through a sequence of edge deletions. Although our new decremental algorithm is not asymptotically faster than repeated applications of a static algorithm, i.e., it runs in $O(mk)$ time for $k$ edge deletions, it performs well in practice. By combining our new incremental and decremental algorithms we obtain a fully dynamic algorithm that maintains the dominator tree through intermixed sequence of insertions and deletions of edges.
Finally, we present efficient implementations of our new algorithms as well as of existing algorithms, and conduct an extensive experimental study on real-world graphs taken from a variety of application areas.
\end{abstract}

\section{Introduction}
\label{sec:intro}

A flow graph $G=(V,E,s)$ is a directed graph with a distinguished start vertex $s \in V$. A vertex $v$ is \emph{reachable} in $G$ if there is a path from $s$ to $v$; $v$ is \emph{unreachable} if no such path exists. The \emph{dominator relation} on $G$ is defined for the set of reachable vertices as follows.
A vertex $w$ \emph{dominates} a vertex $v$ if every path from $s$ to $v$ includes $w$.
We let $\mathit{Dom}(v)$ denote the set of all vertices that dominate $v$. If $v$ is reachable then $\mathit{Dom}(v) \supseteq \{s ,v\}$; otherwise $\mathit{Dom}(v) = \emptyset$.
For a reachable vertex $v$, $s$ and $v$ are its \emph{trivial dominators}. A vertex $w \in \mathit{Dom}(v)-v$ is a \emph{proper dominator} of $v$.
The \emph{immediate dominator} of a vertex $v\neq s$, denoted $d(v)$, is the unique vertex $w\neq v$ that dominates $v$ and is dominated by all vertices in
$\mathit{Dom}(v)-v$. The dominator relation is reflexive and transitive.
Its transitive reduction is a rooted tree, the \emph{dominator tree} $D$: $u$ dominates $w$ if and only if $u$ is an ancestor of $w$ in $D$.
To form $D$, we make each reachable vertex $v \not= s$ a child of its immediate dominator.

The problem of finding dominators has been extensively studied, as it occurs in several applications.
The dominator tree is a central tool in program optimization and code generation~\cite{cytron:91:toplas}. Dominators have also been used in constraint programming~\cite{QVDR:PADL:2006},
circuit testing~\cite{amyeen:01:vlsitest}, theoretical biology~\cite{foodwebs:ab04}, memory profiling~\cite{memory-leaks:mgr2010},
fault-tolerant computing~\cite{FaultTolerantReachability},
connectivity and path-determination problems~\cite{2vc,2VCSS:Geo,2ECB,2VCB,2CC:HenzingerKL15,Italiano2012,2vcb:jaberi15,2VCC:Jaberi2016}, and the analysis of diffusion networks~\cite{Rodrigues:icml12}.
Allen and Cocke showed that the dominator relation can be
computed iteratively from a set of data-flow equations~\cite{cf:ac}. A direct implementation of this method has an
$O(mn^2)$ worst-case time bound, for a flow graph with $n$ vertices and $m$ edges. Cooper et al.~\cite{dom:chk01} presented a clever
tree-based space-efficient implementation of the
iterative algorithm. Although it does not improve the $O(mn^2)$ worst-case time bound, the tree-based version is much more efficient in practice.
Purdom and Moore~\cite{dom:pm72}
gave an algorithm, based on reachability, with complexity $O(mn)$.
Improving on previous work by Tarjan~\cite{domin:tarjan}, Lengauer
and Tarjan~\cite{domin:lt} proposed an $O(m \log_{(m/n + 1)}{n})$-time
algorithm and a more complicated $O(m \alpha(m,n))$-time version,
where $\alpha(m,n)$ is an extremely slow-growing functional
inverse of the Ackermann function~\cite{dsu:tarjan}.  Subsequently, more-complicated but truly linear-time algorithms to compute $D$ were discovered~\cite{domin:ahlt,dominators:bgkrtw,domin:bkrw,dom:gt04}, as well as near-linear-time algorithms that use simple data structures \cite{dominators:Fraczak2013,dominators:poset}.

An experimental study of static algorithms for computing dominators was presented in
\cite{dom_exp:gtw}, where careful
implementations of both versions of the Lengauer-Tarjan algorithm, the iterative algorithm of Cooper et al., and a
new hybrid algorithm (\algo{snca}) were given. In these experimental results
the performance of all these algorithms was similar, but
the simple version of the Lengauer-Tarjan algorithm
and the hybrid algorithm were most consistently fast, and their advantage increased as the input
graph got bigger or more complicated. The graphs used in~\cite{dom_exp:gtw} were taken from the application areas mentioned above and have moderate size
(at most a few thousand vertices and edges) and simple enough structure that they can be efficiently processed by the iterative algorithm.
Recent experimental results for computing dominators in large graphs are reported in~\cite{strong-articulation:algorithmica,2VCSS:Geo,DomCertExp:SEA13,Loops:SEA14}.
There it is apparent that the simple iterative algorithms are not competitive with the more sophisticated algorithms based on Lengauer-Tarjan
for larger and more complicated graphs.
\ignore{The graphs used in these experiments were taken from applications of dominators in memory profiling~\cite{memory-leaks:mgr2010,objown:mitchell06}, testing 2-vertex connectivity and computing sparse 2-vertex connected spanning subgraphs~\cite{2vc,2VCSS:Geo}, and computing strong articulation points and strong bridges in directed graphs~\cite{Italiano2012}, which typically involve much larger and complicated graphs.}

Here we consider the problem of dynamically maintaining the dominator relation of a flow graph that undergoes both insertions and deletions of edges. Vertex insertions and deletions can be simulated using combinations of edge updates. We recall that a dynamic graph problem is said to be \emph{fully dynamic} if it requires to process both insertions and deletions of edges, \emph{incremental} if it requires to process edge insertions only and \emph{decremental} if it requires to process edge deletions only.
The fully dynamic dominators problem arises in various applications, such as data flow analysis and compilation~\cite{semidynamic-digraphs:CFNP}. Moreover,~\cite{2vc,Italiano2012} imply that a fully dynamic dominators algorithm can be used for dynamically testing 2-vertex connectivity, and maintaining the strong articulation points of a digraph. The decremental dominators problem appears in the computation of 2-connected components in digraphs~\cite{LuigiGILP15,2CC:HenzingerKL15,2VCC:Jaberi2016}.

The problem of updating the dominator relation has been studied for few decades (see, e.g.,~\cite{dynamicdominator:AL,incrementaldominators:CR,semidynamic-digraphs:CFNP,PGT11,RR:incdom,SGL}). However, a
worst-case complexity bound for a single update better than $O(m)$ has been only achieved for special cases, mainly for incremental or decremental problems.
Specifically, the algorithm of Cicerone et al.~\cite{semidynamic-digraphs:CFNP} achieves $O(n \max\{k,m_0\} + q)$ running time for processing a sequence of $k$ edge insertions interspersed with $q$ queries of the type ``does $u$ dominate $v$?'', for a flow graph with $n$ vertices and initially $m_0$ edges.
This algorithm, however, requires $O(n^2)$ space, as it needs to maintain the transitive closure of the graph.
The same bounds are also achieved for a sequence of $k$ deletions, but only for a \emph{reducible} flow graph (defined below).
Alstrup and Lauridsen describe in a technical report~\cite{dynamicdominator:AL} an algorithm that maintains the dominator tree through a sequence of $k$ edge insertions interspersed with $q$ queries in $O(m \min\{k,n\} + q)$ time. In this bound $m$ is the number of edges after all insertions.
Unfortunately, the description and the analysis of this algorithm are incomplete.
Our main theoretical contribution is to provide a simple incremental algorithm that
maintains the dominator tree through a sequence of $k$ edge insertions in $O(m \min\{k,n\} + kn)$ time.
We can also answer dominance queries (test if a vertex $u$ dominates another vertex $v$) in constant time.
Moreover, we provide an efficient implementation of this algorithm that performs very well in practice.

Although theoretically efficient solutions to the fully dynamic dominators problem appear still beyond reach, there is a need for practical algorithms and fast implementations in several application areas. In this paper, we also present new fully dynamic dominators algorithms and efficient implementations of known algorithms, such as the algorithm by Sreedhar, Gao and Lee~\cite{SGL}. We evaluate the implemented algorithms experimentally using real data taken from the application areas of dominators. To the best of our knowledge, the only previous experimental study of (fully) dynamic dominators algorithms appears in~\cite{PGT11}; here we provide new algorithms, improved implementations, and an experimental evaluation using bigger graphs taken from a larger variety of applications.
Other previous experimental results, reported in~\cite{sreedhar:phd} and the references therein, are limited to comparing incremental algorithms against the static computation of dominators.

\section{Basic definitions and properties}
\label{sec:properties}

The algorithms we consider can be stated in terms of two structural properties of dominator trees that we discuss next. Let $G=(V,E,s)$ be a flow graph, and let $T$ be a tree rooted at $s$, with vertex set consisting of the vertices that are reachable from $s$.
For a reachable vertex $v \not= s$, let $t(v)$ denote the parent of $v$ in $T$.
Tree $T$ has the \emph{parent property} if for all $(v, w) \in E$ such that $v$ is reachable,
$v$ is a descendant of $t(w)$ in $T$. If $T$ has the parent property then $t(v)$ dominates $v$ for every reachable vertex $v \not= s$~\cite{DomCert:TALG}.
The next lemma states another useful property of trees that satisfy the parent property.

\begin{lemma}\cite{DomCert:TALG}
\label{lemma:wide-path}
Let $T$ we a tree with the parent property. If $v$ is an ancestor of $w$ in $T$, there is a path from $v$ to $w$ in $G$, and every vertex on a simple path from $v$ to $w$ in $G$ is a descendant of $v$ but not a proper descendant of $w$ in $T$.
\end{lemma}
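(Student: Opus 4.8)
The plan is to treat the two assertions separately. For the first one (existence of a path from $v$ to $w$), I would invoke the fact recalled just before the lemma: since $T$ has the parent property, $t(x)$ dominates $x$ for every reachable $x\neq s$, so by transitivity of the dominator relation every ancestor of $x$ in $T$ dominates $x$. In particular $v$ dominates $w$; since $w$ is reachable, some path from $s$ to $w$ exists, it must pass through $v$, and the suffix of that path from the last occurrence of $v$ is a path from $v$ to $w$. The degenerate cases $v=w$ or $w=s$ both force $v=w$ together with the empty path, so below I assume $v\neq w$ (and hence $w\neq s$).

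For the structural claim, fix a simple path $Q=(v=u_0,u_1,\ldots,u_\ell=w)$ in $G$. Each $u_i$ is reachable (reachability of $v$ propagates along $Q$), so the parent property applies to every edge $(u_i,u_{i+1})$ and yields that $u_i$ is a descendant of $t(u_{i+1})$ in $T$. Write $T_x$ for the vertex set of the subtree of $T$ rooted at $x$; I will use only that $T_x$ is closed under taking descendants in $T$, and that if $y\in T_x$ with $y\neq x$ then $t(y)\in T_x$. To show $Q\subseteq T_v$, note first that if $v=s$ this is immediate since every $u_i$ is reachable, so assume $v\neq s$. Suppose $Q$ left $T_v$; since $u_0=v$ and $u_\ell=w$ both lie in $T_v$ (as $v$ is an ancestor of $w$), $Q$ would have to re-enter $T_v$, giving an index $j\geq 1$ with $u_j\in T_v$ but $u_{j-1}\notin T_v$. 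By the parent property $u_{j-1}$ is a descendant of $t(u_j)$, hence $t(u_j)\notin T_v$ (otherwise $u_{j-1}\in T_v$); but the only vertex of $T_v$ whose parent lies outside $T_v$ is $v$ itself, so $u_j=v=u_0$, contradicting the simplicity of $Q$.

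The remaining claim is obtained by the symmetric argument applied at $w$ in place of $v$. Let $j$ be the least index with $u_j\in T_w$; such $j$ exists since $u_\ell=w\in T_w$, and $j\geq 1$ because $v\notin T_w$ ($v$ being a proper ancestor of $w$). If $u_j$ were a proper descendant of $w$, then $t(u_j)\in T_w$, so its descendant $u_{j-1}\in T_w$, contradicting the minimality of $j$; therefore $u_j=w$. Since $Q$ is simple, $w$ occurs only as $u_\ell$, so no $u_i$ with $i<\ell$ lies in $T_w$, i.e., no vertex of $Q$ is a proper descendant of $w$.

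The crux is that the parent property constrains only individual edges and by itself does not prevent a path from wandering out of $T_v$ or down into $T_w$; the point is to upgrade it to a global statement by examining the first point at which the path (re-)enters the relevant subtree and combining that with the simplicity of $Q$, which forbids a second visit to the apex of that subtree. Everything else — reachability of the vertices of $Q$, the transitivity step in the first part, and the two elementary facts about $T_x$ — is routine.
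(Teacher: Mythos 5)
The paper does not prove this lemma at all: it is quoted verbatim from the cited reference \cite{DomCert:TALG}, so there is no in-paper proof to compare against. Judged on its own, your argument is correct and complete. The first part (existence of a path) correctly leverages the stated consequence of the parent property that $t(x)$ dominates $x$, lifts it by transitivity to ``every $T$-ancestor of $x$ dominates $x$,'' and extracts the suffix of an $s$-to-$w$ path after the last occurrence of $v$. The two structural claims are handled by the right mechanism: the parent property is a local, per-edge constraint, and you globalize it by looking at the first index at which the path (re-)enters $T_v$ (respectively first enters $T_w$), observing that the only vertex of a subtree whose parent lies outside it is its root, and then using simplicity of $Q$ to rule out a second visit to that root. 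The edge cases are also handled cleanly: $v=s$ is dispatched separately so that $t(u_j)$ is always defined where you use it, and the degenerate case $v=w$ reduces to the trivial path. This is essentially the standard proof of this fact, and I see no gap.
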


Let $T$ be a tree with the parent property, and let $v$ be a reachable vertex of $G$.
We define the \emph{support} $\mathit{sp}_T(v,w)$ of an edge $(v,w)$ with respect to $T$ as follows: if $v = t(w)$, $\mathit{sp}_T(v,w) = v$; otherwise, $\mathit{sp}_T(v,w)$ is the child of $t(w)$ that is an ancestor of $v$.

Tree $T$ has the \emph{sibling property} if $v$ does not dominate $w$ for all siblings $v$ and $w$. The parent and sibling properties are necessary and sufficient for a tree to be the dominator tree.

\begin{theorem}\cite{DomCert:TALG}
\label{thm:wide-narrow-dom}
Tree $T$ is the dominator tree ($T=D$) if and only if it has the parent and the sibling properties.
\end{theorem}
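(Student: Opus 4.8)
The plan is to prove the two implications separately. The forward direction, that $T=D$ implies the parent and sibling properties, is a routine consequence of the definition of the dominator tree, and I would dispatch it in a few lines. For the parent property: given an edge $(v,w)$ with $v$ reachable, if $v$ were not a descendant of $t(w)=d(w)$ in $D$, then $v\neq d(w)$ and there would be an $s$-$v$ path avoiding $d(w)$; appending $(v,w)$ to it yields an $s$-$w$ path avoiding $d(w)$ (using $w\neq d(w)$), contradicting that $d(w)$ dominates $w$. For the sibling property: if $v$ and $w$ are siblings in $D$, say both children of $p$, and $v$ dominated $w$, then $v$ would be a proper ancestor of $w$ in $D$ and hence an ancestor of $p$; but $v$ is also a child of $p$, which is impossible in a tree.

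For the converse I would first invoke the fact already recorded in the text: if $T$ has the parent property then $t(u)$ dominates $u$ for every reachable $u\neq s$. Since domination is transitive, every $T$-ancestor of a vertex dominates it, so the ancestor relation of $T$ is contained in that of $D$; in particular $t(v)$ is an ancestor of $v$ in $D$ for every reachable $v\neq s$. Now assume $T\neq D$. Since the parent maps $t$ and $d$ determine $T$ and $D$ respectively and both trees have the same vertex set and root, there must be a reachable vertex $v$ with $t(v)\neq d(v)$; fix such a $v$ and set $p:=d(v)$. I claim $p$ is not a $T$-ancestor of $v$: otherwise, since $t(v)$ is the $T$-parent of $v$ and $p\neq t(v)$, $p$ would be a proper $T$-ancestor of $t(v)$, so $p$ dominates $t(v)$; but $t(v)$ dominates $v$ and $t(v)\neq v$, so $t(v)\in\mathit{Dom}(v)-v$ and hence $t(v)$ dominates $d(v)=p$; antisymmetry of the domination order then forces $p=t(v)$, a contradiction.

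The core of the proof is to turn this into a violation of the sibling property. Let $q:=\nca{T}{p,v}$. Because $p$ is not a $T$-ancestor of $v$ we have $q\neq p$, and also $q\neq v$ (otherwise $v$ would be a $T$-ancestor of $p$, so $v$ dominates $p=d(v)$, and together with $d(v)$ dominating $v$ this gives $v=d(v)$, impossible). Hence $p$ and $v$ are both proper $T$-descendants of $q$; let $p'$ and $v'$ be the children of $q$ in $T$ that are $T$-ancestors of $p$ and of $v$, respectively. Then $p'$ and $v'$ are siblings in $T$, and $p'\neq v'$ (a common value would be a common $T$-ancestor of $p$ and $v$ strictly below $q$). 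Now $p'$ dominates $p$ (as a $T$-ancestor) and $p$ dominates $v$, so $p'$ dominates $v$; likewise $v'$ dominates $v$. Thus $p'$ and $v'$ both belong to $\mathit{Dom}(v)$, which is totally ordered by domination because it is the set of ancestors of $v$ in the tree $D$; hence one of $p',v'$ dominates the other. This exhibits two distinct siblings of $T$ with one dominating the other, contradicting the sibling property. Therefore $T=D$.

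The step I expect to be the real obstacle is the construction in the last paragraph: realizing that the pair of siblings witnessing the failure of the sibling property should be obtained by descending to $q=\nca{T}{d(v),v}$ and taking its two branches, one toward $d(v)$ and one toward $v$. Everything else is bookkeeping with two standard facts — that domination is the (tree) ancestor order of $D$, so $\mathit{Dom}(v)$ is a chain, and that under the parent property $T$-ancestry implies domination. Note that Lemma~\ref{lemma:wide-path} does not appear to be required for this argument, although it could be used to streamline the path-based reasoning in the forward direction.
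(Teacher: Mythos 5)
The paper does not actually prove this theorem: it is imported verbatim from \cite{DomCert:TALG}, so there is no in-text proof to compare yours against. Judged on its own terms, your argument is correct and essentially self-contained. The forward direction is the routine one you describe. For the converse, you lean on exactly two external facts, both legitimately available: the statement recorded in Section~\ref{sec:properties} that the parent property forces $t(u)$ to dominate $u$ (hence, by transitivity, $T$-ancestry implies dominance), and the standard fact that $\mathit{Dom}(v)$ is a chain because dominance is the ancestor order of the tree $D$. Given those, your reduction is clean: pick $v$ with $t(v)\neq d(v)$, show $d(v)$ cannot be a $T$-ancestor of $v$ (via antisymmetry of dominance and the defining property of the immediate dominator), descend to $q=\mathit{nca}_T(d(v),v)$, and observe that the two children of $q$ leading toward $d(v)$ and toward $v$ are distinct $T$-siblings that both lie in $\mathit{Dom}(v)$, hence are comparable under dominance, violating the sibling property. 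All edge cases ($p'=p$, $v'=v$, $q=s$) are covered by reflexivity of dominance. The one cosmetic remark is that you never need Lemma~\ref{lemma:wide-path}, whereas the source's treatment of the parent property does route through path arguments of that kind; your version trades those for order-theoretic bookkeeping on $\mathit{Dom}(v)$, which is arguably more elementary.
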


Now consider the effect that a single edge update (insertion or deletion) has on the dominator tree $D$. Let $(x,y)$ be the inserted or deleted edge. We let $G'$ and $D'$ denote the flow graph and its dominator tree after the update. Similarly, for any function $f$ on $V$, we let $f'$ be the function after the update.
By definition, $D' \not= D$ only if $x$ is reachable before the update. We say that a vertex $v$ is \emph{affected} by the update if $d'(v) \not= d(v)$. (Note that we can have $\mathit{Dom}'(v) \not= \mathit{Dom}(v)$ even if $v$ is not affected.)
If $v$ is affected then $d(v)$ does not dominate $v$ in $G'$. This implies that the insertion of $(x,y)$ creates a path from $s$ to $v$ that
avoids $d(v)$.

The difficulty in updating the dominance relation lies on two facts:
(i) An affected vertex can be arbitrarily far from the updated edge, and (ii) a single update may affect many vertices.
Two pathological examples are shown in Figures \ref{fig:pathological1} and \ref{fig:pathological2}.
The graph family of Figure \ref{fig:pathological1} contains, for any $n \ge 3$, a directed graph with $n$ vertices that consists of the path $(s=v_0, v_1, v_2, \ldots, v_{n-1})$, together with the reverse subpath $(v_{n-1}, v_{n-2}, \ldots, v_2)$. Initially we have $d(v_i)=v_{i-1}$ for all $i \in \{1, 2,\ldots, n-1\}$. The insertion of edge $(s,v_{n-1})$ makes $d'(v_i)=s$ for all $i \in \{2,3,\ldots,n-1\}$, while the deletion of $(s, v_{n-1})$ restores the initial dominator tree. So each single edge update affects every vertex except $s$ and $v_1$. A similar example, but for a family of directed acyclic graphs is shown in Figure \ref{fig:pathological2}. This graph family contains, for any $n \ge 3$, a directed acyclic graph with $n$ vertices that consists of the path $(s=v_0, v_1, v_2, \ldots, v_{\lfloor n/2 \rfloor -1})$, together with the edges $(v_{\lfloor n/2 \rfloor -1}, v_i)$ for all $i \in \{\lfloor n/2 \rfloor, \lfloor n/2 \rfloor+1, \ldots, n-1\}$. Initially we have $d(v_i)=v_{i-1}$ for all $i \in \{1, 2,\ldots, \lfloor n/2 \rfloor - 1\}$, and  $d(v_i)=v_{\lfloor n/2 \rfloor-1}$ for all $i \in \{\lfloor n/2 \rfloor, \lfloor n/2 \rfloor+1, \ldots, n-1\}$. The insertion of edge $(s,v_{n-1})$ makes $d'(v_i)=s$ for all $i \in \{\lfloor n/2 \rfloor, \lfloor n/2 \rfloor+1, \ldots, n-1\}$, while the deletion of $(s, v_{n-1})$ restores the initial dominator tree. So each single edge update affects $\lceil n/2 \rceil$ vertices.
Moreover, we can construct sequences of $\Theta(n)$ edge insertions (deletions) such that each single insertion (deletion) affects $\Theta(n)$ vertices. Consider, for instance, the graph family of Figure~\ref{fig:pathological1} and the sequence of insertions $(v_{n-3},v_{n-1}), (v_{n-4},v_{n-1}), \ldots, (s,v_{n-1})$, or the the graph family of Figure~\ref{fig:pathological2} and the sequence of insertions $(v_{\lfloor n/2 \rfloor - 2}, v_{n-1}), (v_{\lfloor n/2 \rfloor - 3}, v_{n-1}), \ldots, (s, v_{n-1})$. This implies a lower bound of $\Omega(n^2)$ time for any algorithm that maintains $D$ (or the complete dominator relation) explicitly through a sequence of $\Omega(n)$ edge insertions or a sequence of $\Omega(n)$ edge deletions, and a lower bound of $\Omega(mn)$ time for any algorithm that maintains $D$ (or the complete dominator relation) explicitly through an intermixed sequence of $\Omega(m)$ edge insertions and deletions, that holds even for directed acyclic graphs.

\begin{figure*}
\begin{center}
\includegraphics[width=\textwidth, trim = 0mm 0mm 0mm 80mm, clip]{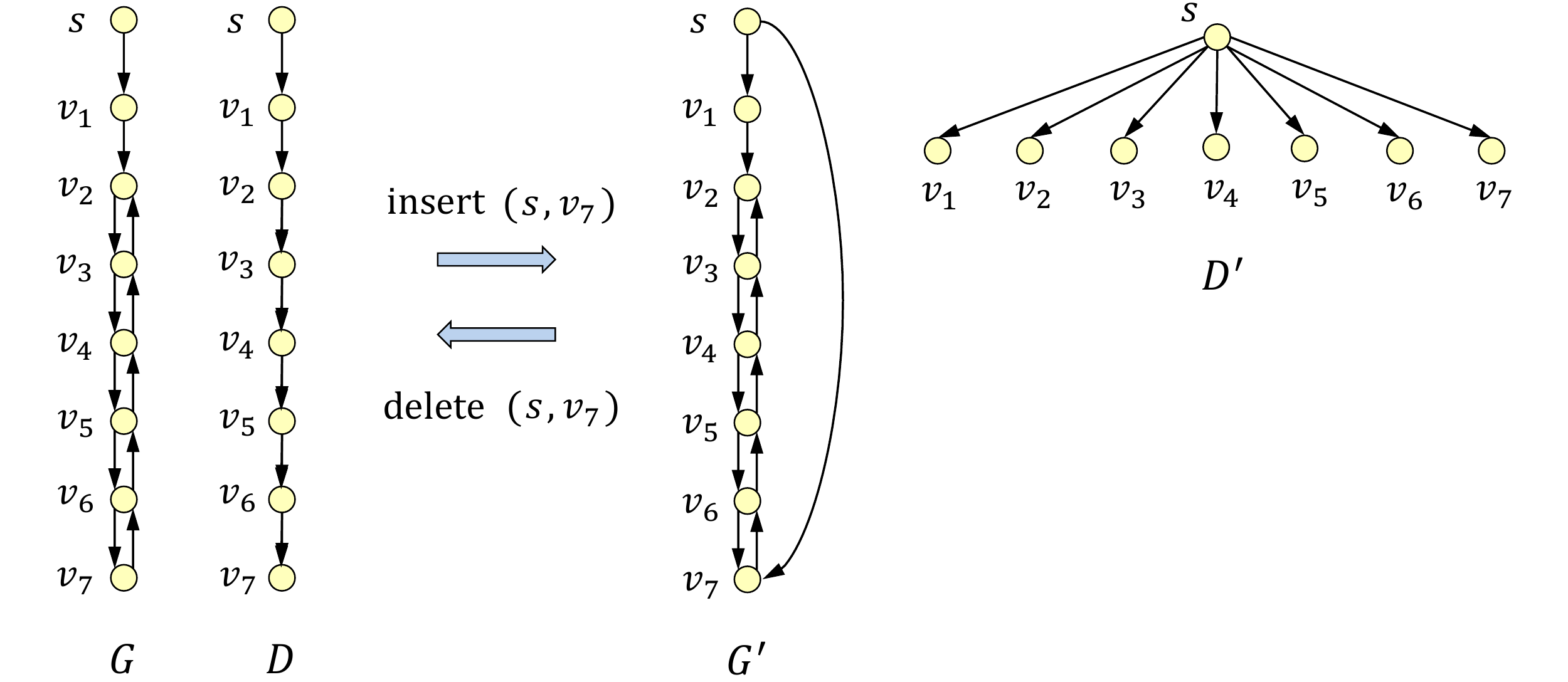}
\caption{\label{fig:pathological1} Pathological updates: Each update (insertion or deletion) affects $n-2$ vertices. (In this instance $n=8$.)}
\end{center}
\end{figure*}

\begin{figure*}
\begin{center}
\includegraphics[width=\textwidth, trim = 0mm 0mm 0mm 80mm, clip]{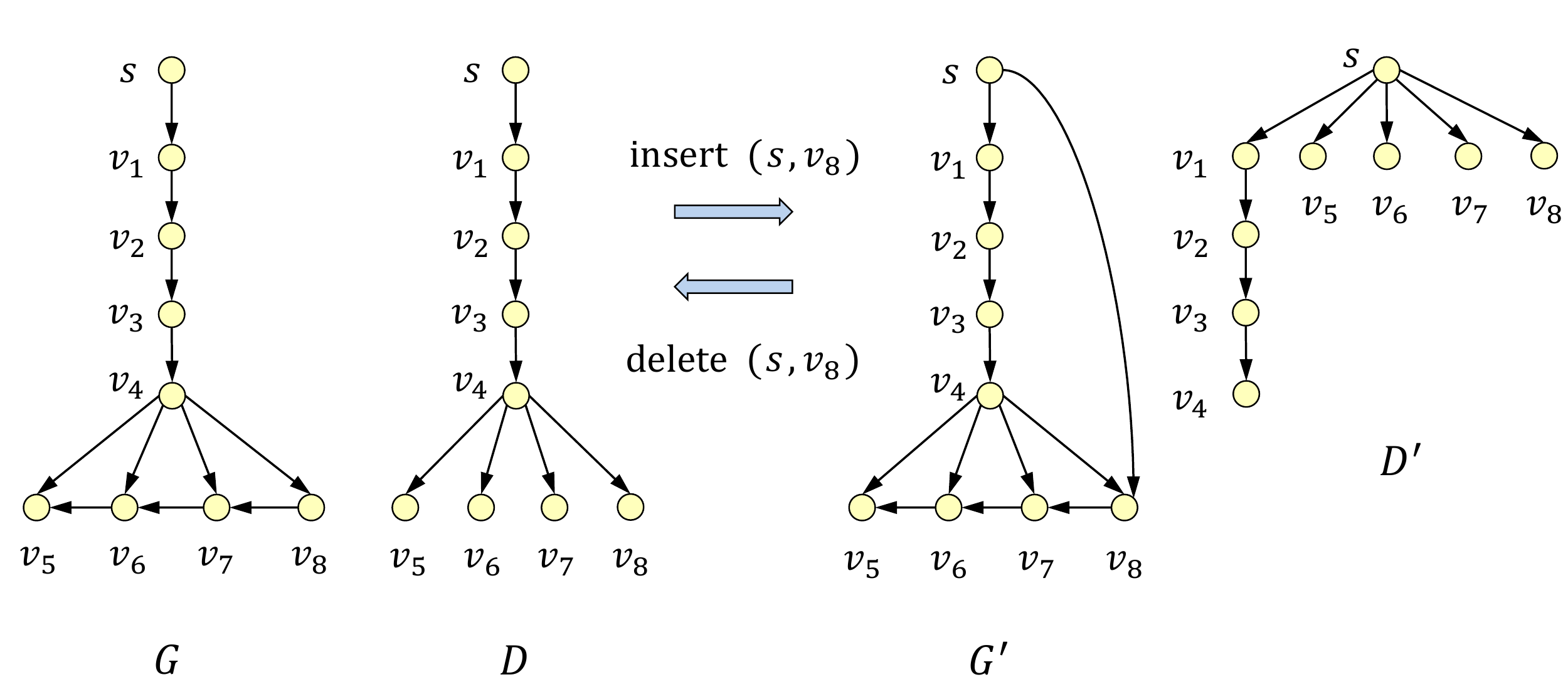}
\caption{\label{fig:pathological2} Pathological updates in the acyclic case: Each update (insertion or deletion) affects $\lceil n/2 \rceil$ vertices. (In this instance $n=9$.)}
\end{center}
\end{figure*}

Using the structural properties of dominator trees stated above we can limit the number of vertices and edges processed during the search for affected vertices.
The following fact is an immediate consequence of the parent and sibling properties of the dominator tree.

\begin{proposition}
\label{proposition:update-properties}
An edge insertion can violate the parent property but not the sibling property of $D$. An edge deletion can violate the sibling property but not the parent property of $D$.
\end{proposition}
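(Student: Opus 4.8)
The plan is to treat the two ``can be violated'' assertions as existential, reading them off the pathological families already exhibited, and to prove the two ``cannot be violated'' assertions from a monotonicity observation together with Theorem~\ref{thm:wide-narrow-dom}. For the existential part: inserting $(s,v_{n-1})$ into the graph of Figure~\ref{fig:pathological1} violates the parent property, since before the insertion the parent of $v_{n-1}$ in $D$ is $d(v_{n-1})=v_{n-2}$ while the tail $s$ of the new edge is the root of $D$, hence not a descendant of $v_{n-2}$; symmetrically, in the graph obtained from Figure~\ref{fig:pathological1} by adding the edge $(s,v_{n-1})$, the vertices $v_1,\dots,v_{n-1}$ are all children of $s$ in $D$, yet deleting $(s,v_{n-1})$ makes $v_1$ dominate $v_2$, which violates the sibling property.

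For the ``cannot'' part, first consider an insertion $(x,y)$, so $E\subseteq E'$. Then every path from $s$ to a vertex $w$ in $G$ is also such a path in $G'$, so for every vertex $w$ reachable in $G$ and every vertex $v$, if $v$ dominates $w$ in $G'$ then $v$ dominates $w$ in $G$. Let $v$ and $w$ be any two siblings in $D$; both are reachable in $G$, hence also in $G'$ (an insertion never removes reachability). By Theorem~\ref{thm:wide-narrow-dom} the tree $D$ has the sibling property with respect to $G$, so $v$ does not dominate $w$ in $G$, and therefore $v$ does not dominate $w$ in $G'$. Hence $D$ still has the sibling property after the insertion.

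Now consider a deletion $(x,y)$, so $E'\subseteq E$; every path in $G'$ is a path in $G$, so reachability can only shrink. Take any edge $(v,w)\in E'$ with $v$ reachable in $G'$. Then $(v,w)\in E$ and $v$ is reachable in $G$, so $w$ is reachable in $G$ and its parent $d(w)$ in $D$ is defined. By Theorem~\ref{thm:wide-narrow-dom} the tree $D$ has the parent property with respect to $G$, so $v$ is a descendant of $d(w)$ in $D$. Since this holds for every edge of $G'$ leaving a vertex reachable in $G'$, the tree $D$ still has the parent property after the deletion.

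The one point to handle with care, and essentially the only obstacle, is the bookkeeping for vertices whose reachability status changes under the update: an insertion may create newly reachable vertices and a deletion newly unreachable ones, so ``$D$ has the parent (resp.\ sibling) property with respect to $G'$'' must be read as quantifying over the edges of $G'$ that leave a vertex reachable in $G'$ (equivalently, over the vertices common to $D$ and to the reachable set of $G'$). Once the quantifiers are fixed in this way, the arguments above apply unchanged, and no further estimates are needed.
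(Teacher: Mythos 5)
The paper states this proposition without proof, describing it only as ``an immediate consequence of the parent and sibling properties of the dominator tree,'' and your argument is a correct, complete elaboration of exactly that observation: dominance can only shrink when an edge is inserted (so siblings in $D$ stay non-dominating), and the set of edges over which the parent property quantifies can only shrink when an edge is deleted (so the descendant condition persists), with the family of Figure~\ref{fig:pathological1} witnessing the two existential halves. Your closing remark about restricting the quantifiers to vertices that remain reachable is the right way to make the informal statement precise, so nothing is missing.
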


Throughout the rest of this paper, $(x,y)$ is the inserted or deleted edge and $x$ is reachable.

\subsection{Edge insertion}
\label{sec:insertion}

We consider two cases, depending on whether $y$ was reachable before the insertion.
Suppose first that $y$ was reachable.
Let $\mathit{nca}_D(x,y)$ be the nearest (lowest) common ancestor of $x$ and $y$ in $D$. If either $\mathit{nca}_D(x,y)=d(y)$ or $\mathit{nca}_D(x,y)=y$ then, by Theorem~\ref{thm:wide-narrow-dom}, the inserted edge has no effect on $D$. Otherwise, the parent property of $D$ implies that $\mathit{nca}_D(x,y)$ is a proper dominator of $d(y)$.
In the following, we denote by $\mathit{depth}(w)$ the depth of vertex $w$ in $D$.

\begin{lemma}
\label{lemma:nca}(\cite{RR:incdom})
Suppose $x$ and $y$ are reachable vertices in $G$. Let $v$ be a vertex that is affected after the insertion of $(x,y)$.
Then $d'(v) = \mathit{nca}_D(x,y)$ and $d'(v)$ is a proper ancestor of $d(v)$ in $D$.
\end{lemma}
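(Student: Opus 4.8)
The plan is to show that after inserting $(x,y)$, the vertex $z := \mathit{nca}_D(x,y)$ still dominates every vertex, while no vertex strictly between $z$ and $d(y)$ (in $D$) dominates $y$ any longer; then argue that an affected vertex $v$ must have $d'(v)$ equal to exactly $z$. First I would record the reachability situation: since $x$ and $y$ are reachable before the insertion, all vertices reachable in $G$ remain reachable in $G'$ and no new vertices become reachable, so the vertex sets of $D$ and $D'$ coincide. By Proposition~\ref{proposition:update-properties} the insertion can only violate the parent property, and by the discussion preceding the lemma we may assume $z$ is a proper dominator of $d(y)$ (otherwise no vertex is affected and the statement is vacuous).

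Next I would prove that $z$ still dominates every vertex it dominated before, i.e. $z$ is an ancestor in $D'$ of everything it was an ancestor of in $D$. The key observation is that inserting $(x,y)$ can only destroy domination by vertices that lie on the tree path in $D$ strictly between $z$ and $y$: any new $s$--$v$ path in $G'$ that uses the edge $(x,y)$ reaches $x$ first, and $z$ dominated $x$ in $G$, so $z$ is still on that path; hence $z \in \mathit{Dom}'(v)$ whenever $z \in \mathit{Dom}(v)$. In particular $z$ remains an ancestor of $y$ in $D'$. Now let $v$ be affected. Since $d(v)$ no longer dominates $v$ in $G'$, there is an $s$--$v$ path $P$ in $G'$ avoiding $d(v)$; this path must use $(x,y)$ (it is the only new edge), so $P$ reaches $y$, and the prefix of $P$ up to $y$ avoids $d(v)$ as well, which forces $d(v)$ to lie strictly below $z$ on the tree path from $z$ to $y$ in $D$ — in particular $d(v)$ was a (not necessarily proper) descendant of $z$, $v$ itself is a descendant of $d(v)$ hence of $z$, and therefore $z$ still dominates $v$ in $G'$: thus $d'(v)$ is a descendant of $z$ in $D'$, and since $d(v)$ is no longer a dominator of $v$, $d'(v)$ is in fact a proper ancestor of $d(v)$.

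It remains to pin down $d'(v) = z$ exactly. Here I would use Lemma~\ref{lemma:wide-path} applied to $D$: since $z$ is an ancestor of $v$ in $D$ and also an ancestor of $x$ (hence reachable, with a path to $x$), any simple $s$--$v$ path in $G'$ that detours through the new edge enters the subtree of $D$ rooted at $z$ only at $z$ itself — concretely, the portion of such a path before it first reaches $z$ uses only old edges and old vertices outside $z$'s subtree, so $z$ is a cut vertex for $s$--$v$ in $G'$, while the first vertex after $z$ on such a path can be made to avoid any fixed proper descendant of $z$ (by routing through $x$ and the new edge $(x,y)$, using that $z$ is the nearest common ancestor so $x$ and $y$ hang off $z$ through different children). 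Consequently every proper descendant $w$ of $z$ fails to dominate $v$ in $G'$ whenever $w$ fails to dominate $y$; combined with the fact that $d'(v)$ is an ancestor of $d(v)$ strictly above it and a descendant of $z$, we get $d'(v) = z$. I expect the main obstacle to be this last step: carefully verifying that one can simultaneously route an $s$--$v$ path through the new edge \emph{and} avoid a prescribed proper descendant of $z$, which is exactly where the nearest-common-ancestor hypothesis (as opposed to merely "common ancestor") and the parent property of $D$ via Lemma~\ref{lemma:wide-path} are essential.
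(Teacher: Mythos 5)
First, a point of comparison: the paper does not prove Lemma~\ref{lemma:nca} at all --- it is imported from Ramalingam and Reps~\cite{RR:incdom} --- so the only in-paper argument to measure yours against is the proof of Lemma~\ref{lemma:insert-affected}, and that proof contains precisely the ingredient your sketch is missing. Your skeleton is the standard and correct one: (i) $z=\mathit{nca}_D(x,y)$ still dominates $v$ in $G'$ (your argument via the prefix of any new $s$--$v$ path up to $x$ is fine), (ii) $d'(v)$ is a proper ancestor of $d(v)$ in $D$ (fine, via $\mathit{Dom}'(v)\subseteq\mathit{Dom}(v)$), and (iii) every vertex $w$ on the tree path strictly between $z$ and $d(v)$ must be shown to lose its dominator status over $v$. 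The problems are in how you discharge the two places where real work is needed.

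The first gap: you assert that the prefix of $P$ up to $y$ avoiding $d(v)$ ``forces $d(v)$ to lie strictly below $z$ on the tree path from $z$ to $y$.'' The prefix only shows that $d(v)$ does not dominate $x$; it does not put $d(v)$ on the tree path to $y$ at all. For that you need the \emph{suffix}: write $P=Q\cdot(x,y)\cdot R$ with $R$ a $y$-to-$v$ path in $G$ avoiding $d(v)$; then any $s$-to-$y$ path in $G$ followed by $R$ is an $s$-to-$v$ path in $G$ and so must contain $d(v)$, whence $d(v)$ dominates $y$ in $G$. Only then, since $d(v)$ dominates $y$ but not $x$ while $z$ dominates both, is $d(v)$ a proper descendant of $z$. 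The second and more serious gap is the crux of step (iii): the sentence ``consequently every proper descendant $w$ of $z$ fails to dominate $v$ in $G'$ whenever $w$ fails to dominate $y$'' is exactly what has to be proved, and your routing argument only supplies the $s$-to-$y$ leg (which is indeed easy: such a $w$ is an ancestor of $y$ and a proper descendant of $z$, hence not an ancestor of $x$, so an $s$-to-$x$ path avoiding $w$ followed by $(x,y)$ works). What is missing is the continuation from $y$ to $v$ avoiding $w$, and this is where the depth argument from the paper's proof of Lemma~\ref{lemma:insert-affected} is indispensable: every vertex $u$ on $R$ satisfies $\mathit{depth}(u)>\mathit{depth}(d(v))$, because otherwise $d(v)$ would not be an ancestor of $u$, an $s$-to-$u$ path avoiding $d(v)$ would exist, and splicing it with the tail of $R$ would contradict $d(v)\in\mathit{Dom}(v)$. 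Since each candidate $w$ has $\mathit{depth}(w)\le\mathit{depth}(d(v))$, $R$ avoids $w$ and the concatenation kills $w$. Without some such argument tied to the position of $w$, your blanket claim is false (e.g.\ it fails for $w=v$, a proper descendant of $z$ that certainly dominates $v$). You correctly sensed that the last step is the obstacle, but you located the difficulty in the $s$-to-$y$ leg, which is the easy half, rather than in the $y$-to-$v$ leg, which is where the proof actually lives.
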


For Lemma \ref{lemma:nca} we have that all affected vertices $v$ satisfy $\mathit{depth}(\mathit{nca}_D(x,y)) < \mathit{depth}(d(v)) < \mathit{depth}(v) \le \mathit{depth}(y)$.
Based on the above observations we obtain the following lemma, which is a refinement of a result in~\cite{dynamicdominator:AL}. (See Figure \ref{fig:insert}.)

\begin{lemma}
\label{lemma:insert-affected}
Suppose $x$ and $y$ are reachable vertices in $G$. 
Then, a vertex $v$ is affected after the insertion of the edge $(x,y)$ if and only if $\mathit{depth}(\mathit{nca}_D(x,y)) < \mathit{depth}(d(v))$ and there is a path $P$ from $y$ to $v$ such that $\mathit{depth}(d(v)) < \mathit{depth}(w)$ for all $w \in P$. 
\end{lemma}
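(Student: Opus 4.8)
The plan is to prove the two directions separately, using Lemma~\ref{lemma:nca} for the forward direction and a direct dominator-path argument for the converse. Throughout write $z = \mathit{nca}_D(x,y)$.

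\textbf{($\Rightarrow$)} Suppose $v$ is affected. By Lemma~\ref{lemma:nca}, $d'(v) = z$ and $z$ is a proper ancestor of $d(v)$ in $D$, so in particular $\mathit{depth}(z) < \mathit{depth}(d(v))$, which is the first condition. For the second condition, I would argue that since $v$ is affected, $d(v)$ no longer dominates $v$ in $G'$, so there is an $s$-to-$v$ path in $G'$ avoiding $d(v)$; because the only new edge is $(x,y)$, this path must use $(x,y)$, and hence contains a subpath $P$ from $y$ to $v$ that avoids $d(v)$. I then need to upgrade ``$P$ avoids $d(v)$'' to ``$\mathit{depth}(d(v)) < \mathit{depth}(w)$ for every $w \in P$''. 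For this, observe that $d'(v) = z$ dominates every vertex of $P$ in $G'$ (since $z = d'(v)$ must lie on every $s$-to-$v$ path in $G'$, and one can route through any $w \in P$); moreover $d'(w)$ is an ancestor-or-equal of $v$'s old dominator structure — more carefully, I would invoke Lemma~\ref{lemma:wide-path} applied to $D$ together with the parent property (which still holds for $D$ as a tree over $G'$ on the reachable vertices, by Proposition~\ref{proposition:update-properties}, an edge insertion does not violate the parent property): any simple path from $z$ to $v$ in $G'$ has all its vertices as descendants of $z$ but not proper descendants of $v$ in $D$. Combining with the fact that such a $w$ cannot be an ancestor of $d(v)$ in $D$ (else $w$, lying on an $s$-to-$v$ path in $G'$ below... ) — the cleanest route is: each $w \in P$ is reachable in $G'$; if $\mathit{depth}(w) \le \mathit{depth}(d(v))$ for some $w \in P$, then since $w$ is a descendant of $z$ (by Lemma~\ref{lemma:wide-path}) and $z$ is a proper ancestor of $d(v)$, $w$ would be an ancestor-or-equal of $d(v)$ on the $z$-to-$v$ path; but then replacing the prefix of $P$ up to $w$ by the tree path and noting $d(v)$ is on the tree path from $w$ to $v$ forces $d(v) \in P$, contradicting that $P$ avoids $d(v)$. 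Hence $\mathit{depth}(d(v)) < \mathit{depth}(w)$ for all $w \in P$.

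\textbf{($\Leftarrow$)} Conversely, suppose $\mathit{depth}(z) < \mathit{depth}(d(v))$ and there is a path $P$ from $y$ to $v$ with $\mathit{depth}(d(v)) < \mathit{depth}(w)$ for all $w \in P$. I want to conclude $v$ is affected, i.e. $d'(v) \ne d(v)$; it suffices to exhibit an $s$-to-$v$ path in $G'$ avoiding $d(v)$. Take the tree path in $D$ from $s$ down to $z$ (which avoids $d(v)$ since $z$ is a proper ancestor of $d(v)$, by $\mathit{depth}(z) < \mathit{depth}(d(v))$ and $z$ being an ancestor of $d(v)$ — this uses that $z = \mathit{nca}_D(x,y)$ is an ancestor of $y$ hence of $d(v)$ when... here I need that $z$ is an ancestor of $d(v)$: indeed $v$'s only candidate new dominator is $z$, but more elementarily, $z$ is an ancestor of $y$, and $y$ is related to $v$ via $P$; I would instead build the path as: $s \anc z$ in $D$, then $z \anc x$ in $D$, then the edge $(x,y)$, then $P$ from $y$ to $v$). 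The segment $s \anc x$ in $D$ avoids $d(v)$ provided no vertex on it has depth $\ge \mathit{depth}(d(v))$ on the relevant branch — this holds because $z$ is the nca and the $z$-to-$x$ branch in $D$ is disjoint (except at $z$) from the $z$-to-$y$ branch, hence from the branch containing $d(v)$ and $v$ (as $d(v), v$ are descendants of... of the child of $z$ on the $y$-side, by $\mathit{depth}(z) < \mathit{depth}(d(v))$ and $d(v)$ an ancestor of $v$, $v$ a descendant of $y$'s branch). Finally $P$ avoids $d(v)$ by the depth hypothesis. Concatenating gives an $s$-to-$v$ walk in $G'$ avoiding $d(v)$, so $d(v)$ does not dominate $v$ in $G'$, hence $v$ is affected.

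\textbf{Main obstacle.} The delicate point is the forward direction's second half: turning the qualitative statement ``$P$ avoids $d(v)$'' into the quantitative depth bound on \emph{every} vertex of $P$. This is where Lemma~\ref{lemma:wide-path} is essential — it pins down where vertices on a $G$-path between two tree-related vertices must lie in $D$ — and one must argue carefully that $P$ can be taken simple (or restrict to a simple sub-walk) and that every $w \in P$ is reachable in $G$ already (not merely in $G'$), so that its depth in $D$ is well-defined; the hypothesis that $x$ and $y$ are reachable in $G$, together with the path $P$ existing in $G$, handles this. A secondary fussy point, appearing in both directions, is verifying that $z = \mathit{nca}_D(x,y)$ is a proper ancestor of $d(v)$ and that the $D$-branch through $x$ meets the branch through $v$ only at $z$; this follows from Lemma~\ref{lemma:nca} in the forward direction, and in the converse direction it is forced by the hypothesis $\mathit{depth}(z) < \mathit{depth}(d(v))$ once one knows $z$ is an ancestor of $v$ — which itself follows because $d'(v)$, if $v$ is affected, equals $z$, making the argument ultimately consistent, though for the ``if'' direction I would instead derive ancestorship of $z$ over $v$ directly from $P$ joining $y$ (a descendant of the relevant child of $z$) to $v$ together with Lemma~\ref{lemma:wide-path} applied in $G$.
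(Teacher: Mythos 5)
Your skeleton matches the paper's proof (forward direction via Lemma~\ref{lemma:nca}, converse by exhibiting an $s$-to-$v$ path in $G'$ avoiding $d(v)$), but both directions fail at the decisive step. In the forward direction, the whole content of the lemma is the upgrade from ``$P$ avoids $d(v)$'' to ``$\mathit{depth}(w)>\mathit{depth}(d(v))$ for every $w\in P$,'' and your argument for it is wrong: from ``$w$ is a descendant of $z=\mathit{nca}_D(x,y)$ and $\mathit{depth}(w)\le\mathit{depth}(d(v))$'' you infer that $w$ is an ancestor-or-equal of $d(v)$, which does not follow ($w$ may lie in an entirely different subtree below $z$), and the subsequent claim that rerouting the prefix of $P$ through tree edges ``forces $d(v)\in P$'' is not a valid deduction. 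The paper's step is the opposite observation and is elementary: if $w\in P$, $w\ne d(v)$ and $\mathit{depth}(w)\le\mathit{depth}(d(v))$, then $d(v)$ cannot be a proper ancestor of $w$ in $D$, so $d(v)$ does not dominate $w$ in $G$; hence $G$ contains a path $Q$ from $s$ to $w$ avoiding $d(v)$, and $Q$ followed by the $w$-to-$v$ suffix of $P$ is an $s$-to-$v$ path \emph{in $G$} avoiding $d(v)$, contradicting that $d(v)$ dominated $v$ before the insertion. The contradiction lives entirely in $G$, so Lemma~\ref{lemma:wide-path} and any reasoning about $D$ as a tree for $G'$ are unnecessary.

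In the converse direction you build the witness path by concatenating tree paths of $D$ (from $s$ to $z$ to $x$) with the edge $(x,y)$ and $P$. Edges of the dominator tree are in general not edges of $G$, so this concatenation is not a walk in $G'$ and the construction does not go through as written. The repair is the paper's: the hypothesis $\mathit{depth}(\mathit{nca}_D(x,y))<\mathit{depth}(d(v))$ yields that $x$ is not a descendant of $d(v)$ in $D$, i.e.\ $d(v)$ does not dominate $x$, so $G$ contains \emph{some} path $Q$ from $s$ to $x$ avoiding $d(v)$; then $Q\cdot(x,y)\cdot P$ is the required path in $G'$ (with $P$ avoiding $d(v)$ by the depth hypothesis). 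Your instinct to locate $x$ relative to $d(v)$ via the nca condition is the right one, but it must be converted into the existence of a $G$-path through non-domination, not into a literal walk down $D$.
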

\begin{proof}
Suppose $v$ is affected. Let $z = \mathit{nca}_D(x,y)$. By Lemma \ref{lemma:nca} we have
$d'(v) = z$ and that $z$ is an ancestor of $d(v)$ in $D$.
Thus $\mathit{depth}(d'(v)) < \mathit{depth}(d(v))$, and there is a path $P$ from $y$ to $v$ in $G$
that does not contain $d(v)$.
Suppose, for contradiction, that $P$ contains some vertex $w \not= d(v)$ such that $\mathit{depth}(w) \le \mathit{depth}(d(v))$.
Let $P'$ be the part of $P$ from $w$ to $v$. Then, $d(v) \not\in P$ since $P$ does not contain $d(v)$.
The fact that $w \not= d(v)$ and $\mathit{depth}(w) \le \mathit{depth}(d(v))$ implies that $d(v)$ is not an ancestor of $w$ in $D$.
Then, there is a path $Q$ in $G$ from $s$ to $w$ that avoids $d(v)$.
So, the catenation of $Q$ and $P'$ is a path from $s$ to $v$ in $G$ that avoids $d(v)$.
This implies that $d(v)$ does not dominate $v$ before the insertion of $(x,y)$, a contradiction.

To prove the converse, consider a vertex $v$ with $\mathit{depth}(\mathit{nca}_D(x,y)) < \mathit{depth}(d(v))$.
Suppose $G$ contains a path $P$ from $y$ to $v$ such that, for all $w \in P$, $\mathit{depth}(d(v)) < \mathit{depth}(w)$.
We argue that $v$ is affected. First we note that $d(v) \not\in P$, since all vertices on $P$ have larger depth.
%
Also, the fact that $\mathit{depth}(\mathit{nca}_D(x,y)) < \mathit{depth}(d(v))$ implies that $x$ is not a descendant of $d(v)$ in $D$.
Hence, $G$ contains a path $Q$ from $s$ to $x$ that avoids $d(v)$. Then $Q \cdot (x,y) \cdot P$ is a path in $G'$ from
$s$ to $v$ that avoids $d(v)$. Thus $v$ is affected.
\end{proof}

\begin{figure*}
\begin{center}
\includegraphics[width=0.8\textwidth, trim = 0mm 0mm 0mm 80mm, clip]{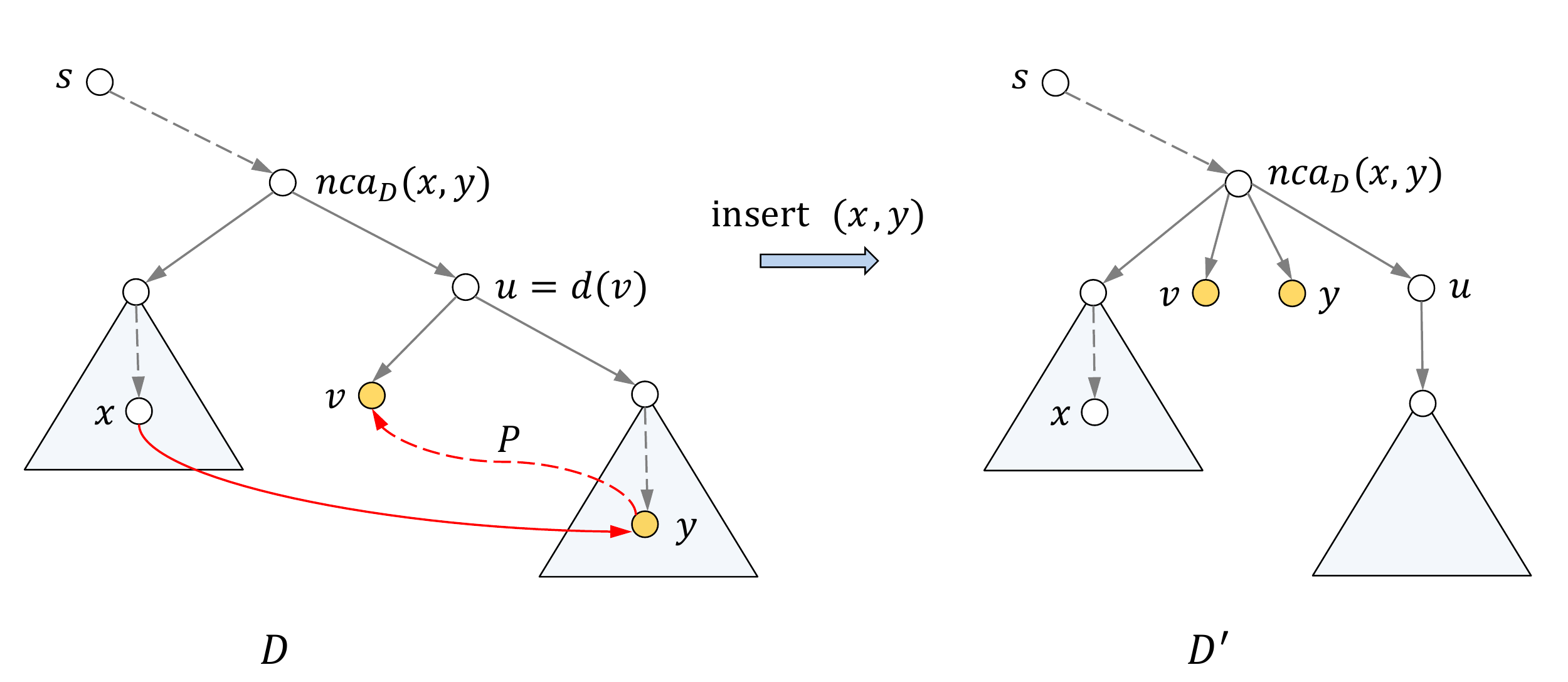}
\caption{\label{fig:insert} Illustration of Lemma \ref{lemma:insert-affected}.}
\end{center}
\end{figure*}

Now suppose $y$ was unreachable before the insertion of $(x,y)$. Then we have $x=d'(y)$. Next, we need to process all other vertices that became reachable after the insertion. To that end, we have three main options:
\begin{itemize}
\item[(a)] Process each edge leaving $y$ as a new insertion, and continue this way until all edges adjacent to newly reachable vertices are processed.
\item[(b)] Compute the set $R(y)$ of the vertices that are reachable from $y$ and were not reachable from $s$ before the insertion of $(x,y)$. We can build the dominator tree $D(y)$ for the subgraph induced by $R(y)$, rooted at $y$, using any static algorithm. After doing that we link $D(y)$ to $x$ by adding the edge $(x,y)$ into $D$. Finally we process every edge $(u,v)$ where $u \in R(y)$ and $v \not\in R(y)$ as a new insertion. Note that the effect on $D$ of each such edge is equivalent to adding the edge $(x,v)$ instead.
\item[(c)] Compute $D'$ from scratch.
\end{itemize}

\subsection{Edge deletion}
\label{sec:deletion}

We consider two cases, depending on whether $y$ becomes unreachable after the deletion of $(x,y)$. Suppose first that $y$ remains reachable. Deletion is harder than insertion because each of the affected vertices may have a different new immediate dominator.
Also, unlike the insertion case, we do not have a simple test, as the one stated in Lemma~\ref{lemma:insert-affected}, to decide whether the deletion affects any vertex. Consider, for example, the following (necessary but not sufficient) condition: 
if $(d(y),y)$ is not an edge of $G$ then there are edges $(u,y)$ and $(w,y)$ such that $\mathit{sp}_D(u,y) \neq \mathit{sp}_D(w,y)$. Unfortunately, this condition may still hold in $D$ after the deletion even when $D' \not= D$. See Figure \ref{fig:deletiontest}.
Despite this difficulty, in Section \ref{sec:dynsnca} we give a simple but conservative test (i.e., it allows false positives) to decide if there are any affected vertices. Also, as observed in~\cite{SGL}, we can limit the search for affected vertices and their new immediate dominators as follows. Since the edge deletion may violate the sibling property of $D$ but not the parent property, it follows that the new immediate dominator of an affected vertex $v$ is a descendant of some sibling of $v$ in $D$. This implies the following lemma that provides a necessary (but not sufficient) condition for a vertex $v$ to be affected.

\begin{figure*}
\begin{center}
\includegraphics[width=\textwidth, trim = 0mm 0mm 0mm 110mm, clip]{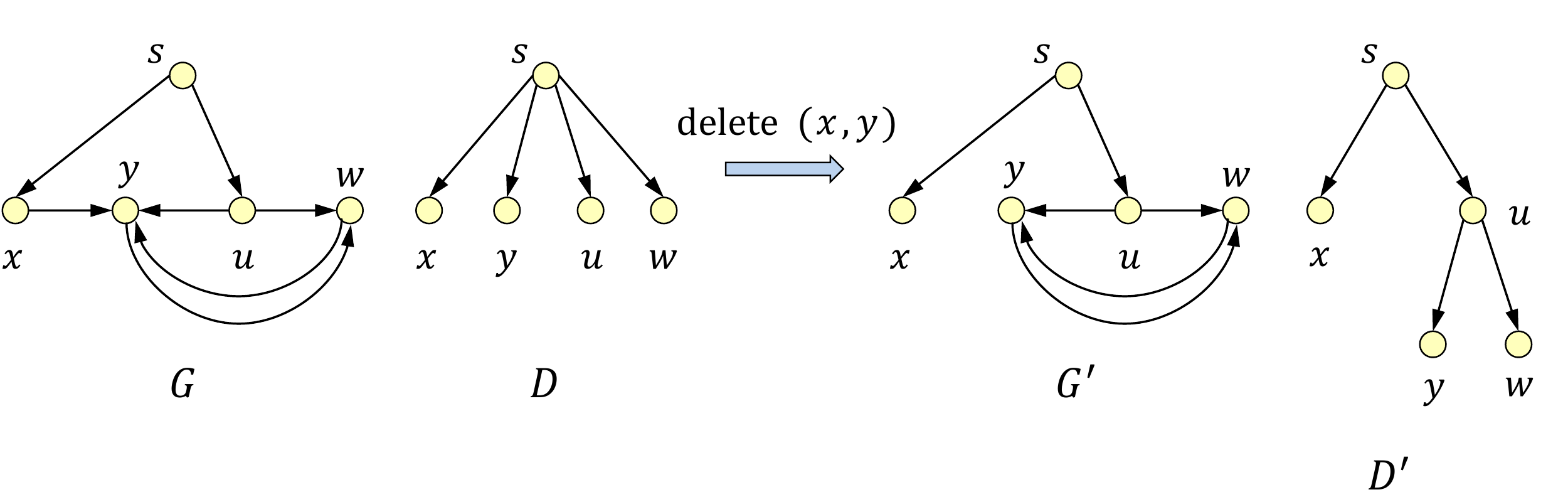}
\caption{\label{fig:deletiontest}
After the deletion of $(x,y)$, $y$ still has two entering edges $(u,y)$ and $(w,y)$ such that $\mathit{sp}_D(u,y) \neq \mathit{sp}_D(w,y)$.}
\end{center}
\end{figure*}

\begin{lemma}
\label{lemma:deletion-reachable}
Suppose $x$ is reachable and $y$ does not becomes unreachable after the deletion of $(x,y)$.
A vertex $v$ is affected only if $d(v)=d(y)$ and there is a path $P$ from $y$ to $v$ such that $\mathit{depth}(d(v)) < \mathit{depth}(w)$ for all $w \in P$.
\end{lemma}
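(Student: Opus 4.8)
The plan is to prove the two assertions separately: first $d(v)=d(y)$, by exploiting that deleting $(x,y)$ from $G$ is the exact inverse of inserting $(x,y)$ into $G'$, and then the existence of the path $P$, by a direct path‑surgery argument that uses $d(v)=d(y)$. Throughout I treat ``$v$ is affected'' as including that $v$ is reachable in $G'$ (otherwise $d'(v)$ is undefined).

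For the first part I would start by noting that $x$ and $y$ are both reachable in $G'$: for $y$ this is the hypothesis, and for $x$ it holds because a simple $s$-to-$x$ path of $G$ cannot traverse the edge $(x,y)$, so $x$ survives the deletion. Hence Lemmas~\ref{lemma:nca} and~\ref{lemma:insert-affected} apply to the insertion of $(x,y)$ into $G'$ with the roles of $D,D',d,d'$ interchanged. Since $v$ is affected, $d'(v)\neq d(v)$, so this insertion changes the dominator tree, i.e.\ $D\neq D'$; by the dichotomy stated just before Lemma~\ref{lemma:nca} this forces $\mathit{nca}_{D'}(x,y)\notin\{d'(y),y\}$, and since $\mathit{nca}_{D'}(x,y)$ is an ancestor of $y$ in $D'$ it must be a \emph{proper} ancestor of $d'(y)$ in $D'$. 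In particular $x$ is not a descendant of $d'(y)$ in $D'$, so $G'$ contains an $s$-to-$x$ path avoiding $d'(y)$; appending $(x,y)$ gives an $s$-to-$y$ path in $G$ avoiding $d'(y)$, so $d'(y)$ does not dominate $y$ in $G$ and $y$ is itself affected. Now the reversed Lemma~\ref{lemma:nca} says every affected vertex $u$ satisfies $d(u)=\mathit{nca}_{D'}(x,y)$; applying it to $u=v$ and to $u=y$ yields $d(v)=\mathit{nca}_{D'}(x,y)=d(y)$.

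For the second part set $p:=d(v)=d(y)$. Because an edge deletion destroys $s$–$v$ paths but creates none and $v$ is reachable in $G'$, $p$ still dominates $v$ in $G'$; hence $p$ dominates $d'(v)$ in $G'$, so $d'(v)$ does not dominate $p$ in $G'$. From this I would deduce that $d'(v)$ does not dominate $v$ in $G$ either — otherwise $d'(v)$ would dominate $p=d(v)$ in $G$, contradicting the existence in $G'\subseteq G$ of an $s$-to-$p$ path avoiding $d'(v)$. Consequently $G$ has a simple path $\rho$ from $s$ to $v$ avoiding $d'(v)$, and every such path must use $(x,y)$ (otherwise it would lie in $G'$ and avoid $d'(v)$, impossible). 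Write $\rho=\sigma\cdot(x,y)\cdot\tau$ with $\sigma$ from $s$ to $x$ and $\tau$ from $y$ to $v$; simplicity of $\rho$ makes $\sigma$ and $\tau$ vertex‑disjoint and forces $\tau$ to lie in $G'$. Since $p=d(y)$ dominates $y$, the prefix $\sigma\cdot(x,y)$ contains $p$, and as $p\neq y$ this places $p$ on $\sigma$, so $\tau$ avoids $p$. Finally, if some vertex $w$ of $\tau$ had $\mathit{depth}(w)\le\mathit{depth}(p)$ then, as $w\neq p$, $p$ would not be an ancestor of $w$ in $D$, so $G$ would contain an $s$-to-$w$ path avoiding $p$; splicing it before the part of $\tau$ from $w$ to $v$ gives an $s$-to-$v$ path of $G$ avoiding $p=d(v)$, a contradiction. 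Hence $P:=\tau$ is a path from $y$ to $v$ with $\mathit{depth}(d(v))<\mathit{depth}(w)$ for every $w\in P$.

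The step I expect to be the main obstacle is establishing $d(v)=d(y)$. A purely forward argument splits into cases according to whether $d(v)$ lies on the $s$-to-$x$ portion or on the $y$-to-$v$ portion of $\rho$, and ruling out the latter case seems to need exactly the information that the insertion lemmas already package; the reversal trick sidesteps this, but it has to be set up carefully so that the reachability hypotheses of Lemmas~\ref{lemma:nca}/\ref{lemma:insert-affected} are genuinely met and the ``no effect'' dichotomy is invoked in the reversed direction. Once $d(v)=d(y)$ is in hand, the construction of $P$ is routine path surgery.
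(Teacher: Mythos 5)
Your proof is correct, and it differs from the paper's in a useful way. The paper's entire proof is the reversal step: it views the deletion as the inverse of inserting $(x,y)$ into $G'$, applies Lemma~\ref{lemma:insert-affected} in that reversed setting to get a $y$-to-$v$ path in $G'$ satisfying the depth condition with respect to $D'$, and then tersely asserts the corresponding inequalities in $D$; it never explicitly argues the first conjunct $d(v)=d(y)$. You use the same reversal idea but only where it is genuinely needed, namely to prove $d(v)=d(y)$: you first show $y$ is itself affected (via the dichotomy preceding Lemma~\ref{lemma:nca} and the fact that $x$ is not a descendant of $d'(y)$ in $D'$) and then apply the reversed Lemma~\ref{lemma:nca} to both $v$ and $y$. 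That fills a real gap in the published argument. For the path you work entirely forward in $G$ and $D$: you show $d'(v)$ does not dominate $v$ in $G$, take a simple $s$-to-$v$ path avoiding $d'(v)$, note it must traverse $(x,y)$, and check that its suffix $\tau$ from $y$ misses $p=d(v)$ and lies strictly below $p$ in $D$. All the small verifications (that $x$ remains reachable in $G'$, that $p\neq d'(v)$, that $\sigma$ and $\tau$ are vertex-disjoint so $\tau$ avoids $p$) go through. What you buy is a self-contained argument that avoids comparing depths across the two dominator trees $D$ and $D'$ --- the step the paper leaves vaguest --- at the cost of a longer write-up; what the paper's route buys is brevity, by reusing Lemma~\ref{lemma:insert-affected} wholesale.
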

\begin{proof}
Let $G$ be the graph immediately before the deletion, and let $G'$ be the graph immediately after the deletion. Consider the reverse operation, i.e., adding $(x,y)$ to $G'$ to produce $G$. Then $v$ is affected by this insertion, thus by Lemma~\ref{lemma:insert-affected} there is a path $P$ from $y$ to $v$ in $G'$ such that $\mathit{depth}'(w) > \mathit{depth}'(d'(v))$ for all $w \in P$, where $d'(u)$ is the immediate dominator of a vertex $u$ in $G'$,
and $\mathit{depth}'(u)$ is the depth of $u$ in the dominator tree of $G'$. Then, in $G$ (the flow graph that results from $G'$ after the insertion of $(x,y)$),
we have $\mathit{depth}(w) \ge \mathit{depth}(v) > \mathit{depth}(d(v))$, for all $w \in P$.
\end{proof}

Next we examine the case where $y$ becomes unreachable after $(x,y)$ is deleted.
This happens if and only if $y$ has no entering edge $(z,y)$ in $G'$ such that $\mathit{sp}_D(z,y) \not= y$. In this case, all descendants of $y$ in $D$ also become unreachable. The deletion of $y$ and of its descendants in $D$, in turn, may affect other vertices.
The vertices that are possibly affected can be identified by the following lemma.

\begin{lemma}
\label{lemma:deletion-unreachable}
Suppose $x$ is reachable and $y$ becomes unreachable after the deletion of $(x,y)$.
A vertex $v$ is affected only if there is a path $P$ from $y$ to $v$ such that $\mathit{depth}(d(v)) < \mathit{depth}(w)$ for all $w \in P$.
\end{lemma}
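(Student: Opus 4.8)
The plan is to compare the dominator trees $D$ of $G$ and $D'$ of $G'$ directly; the usual trick of reversing the deletion into an insertion (as in the proof of Lemma~\ref{lemma:deletion-reachable}) is unavailable here, since Lemma~\ref{lemma:insert-affected} requires both endpoints of the inserted edge to be reachable and $y$ is unreachable in $G'$. Fix an affected vertex $v$; since it is affected it is reachable in $G'$ (so $d'(v)$ is defined). Every $s$-$v$ path in $G'$ is also an $s$-$v$ path in $G$, because the edge set of $G'$ is contained in that of $G$; hence such a path passes through $d(v)$, which shows both that $d(v)$ is reachable in $G'$ and that $\mathit{Dom}'(v) \supseteq \mathit{Dom}(v)$. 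In particular $d(v) \in \mathit{Dom}'(v)$, and since $d(v) \neq d'(v)$ this makes $d(v)$ a proper ancestor of $d'(v)$ in $D'$.

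The first step is to show that $d'(v)$ does not dominate $v$ in $G$. Suppose it did. Then, as $d(v)$ is the immediate dominator of $v$ in $G$ and $d'(v) \neq d(v)$, the vertex $d'(v)$ would be a proper ancestor of $d(v)$ in $D$ and would therefore dominate $d(v)$ in $G$. Pick a simple path $\alpha$ from $s$ to $d(v)$ in $G'$ (possible, as $d(v)$ is reachable in $G'$); being a path of $G$ as well, $\alpha$ must contain $d'(v)$; but $d(v)$ dominates $d'(v)$ in $G'$, so the portion of $\alpha$ from $s$ up to the first occurrence of $d'(v)$ contains $d(v)$ --- impossible, because $\alpha$ is simple, its last vertex is $d(v)$, and $d'(v) \neq d(v)$ occurs strictly before the end.

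Now comes the extraction of the path. Because $d'(v)$ does not dominate $v$ in $G$, there is an $s$-$v$ path $\pi$ in $G$ that avoids $d'(v)$; it is not a path of $G'$ (where $d'(v)$ dominates $v$), so it uses the only edge of $G$ missing from $G'$, namely $(x,y)$. Take $\pi$ to be simple and let $P$ be its subpath from $y$ to $v$; then $P$ does not traverse $(x,y)$, so $P$ is a path in $G'$ --- as well as in $G$ --- and it avoids $d'(v)$. Next I would show that $P$ avoids $d(v)$: if $d(v)$ lay on $P$, then the subpath $P'$ of $P$ from $d(v)$ to $v$ would be a $d(v)$-$v$ path in $G'$ avoiding $d'(v)$, and prepending to it an $s$-$d(v)$ path in $G'$ that avoids $d'(v)$ --- one exists because $d'(v)$, being a proper descendant of $d(v)$ in $D'$, does not dominate $d(v)$ in $G'$ --- would yield an $s$-$v$ walk in $G'$ avoiding $d'(v)$, contradicting that $d'(v)$ dominates $v$ in $G'$. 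Finally, with $P$ a $y$-$v$ path in $G$ avoiding $d(v)$, the bound $\mathit{depth}(d(v)) < \mathit{depth}(w)$ for all $w \in P$ follows exactly as in the second half of the proof of Lemma~\ref{lemma:insert-affected}: a vertex $w \in P$ with $\mathit{depth}(w) \le \mathit{depth}(d(v))$ satisfies $w \neq d(v)$ (as $d(v) \notin P$) and so is not a descendant of $d(v)$ in $D$, whence $G$ has an $s$-$w$ path avoiding $d(v)$, which together with the $w$-to-$v$ part of $P$ is an $s$-$v$ path in $G$ avoiding $d(v)$ --- contradicting that $d(v)$ dominates $v$ in $G$.

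The main obstacle, I expect, is the pair of simple-path contradictions --- that $d'(v)$ stops dominating $v$ once we pass from $G'$ to $G$, and that the rerouted path $P$ may be chosen to avoid $d(v)$. Both hinge on the observation that $P$, being the tail of $\pi$ after the edge $(x,y)$, lies entirely inside $G'$, so the dominance structure of $G'$ --- in particular that $d(v)$ is a proper ancestor of $d'(v)$ in $D'$ --- can be applied to $P$ and its subpaths. The rest is bookkeeping; note that the hypothesis ``$y$ becomes unreachable'' is needed only to place us in the case complementary to Lemma~\ref{lemma:deletion-reachable}, and is not otherwise used.
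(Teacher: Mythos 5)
Your proof is correct, but it takes a genuinely different route from the paper's. The paper reduces the unreachable case to Lemma~\ref{lemma:deletion-reachable}: it introduces the set $E^{+}(y)$ of edges leaving the part of $G$ dominated by $y$, imagines deleting these edges one by one before deleting $(x,y)$, notes that each such deletion leaves its head reachable so Lemma~\ref{lemma:deletion-reachable} applies to it, and then uses the fact that depths only increase along this sequence to pull the path condition back to the original $D$. You instead argue directly from the relationship between $D$ and $D'$: monotonicity of dominator sets under deletion places $d(v)$ strictly above $d'(v)$ in $D'$; your simple-path argument shows $d'(v)$ cannot dominate $v$ in $G$; the witnessing $s$--$v$ path in $G$ that avoids $d'(v)$ must therefore cross the deleted edge $(x,y)$, and its tail from $y$ is the desired $P$, with the depth bound extracted exactly as in the proof of Lemma~\ref{lemma:insert-affected}. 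Your version is self-contained and avoids the somewhat delicate bookkeeping in the paper's induction over intermediate graphs, where both the immediate dominators and the depths change at every step; as you observe, it also never uses the hypothesis that $y$ becomes unreachable, so it yields a uniform proof of the path condition in both deletion lemmas (though not of the extra claim $d(v)=d(y)$ in Lemma~\ref{lemma:deletion-reachable}). What the paper's reduction buys is a proof that directly mirrors option~(a) of its algorithmic treatment of the unreachable case. One shared convention worth flagging: both you and the paper implicitly read ``affected'' as applying only to vertices that remain reachable after the deletion (the paper restricts to $y \notin \mathit{Dom}(v)$); for descendants of $y$ in $D$, which become unreachable, the stated path condition need not hold, so your opening inference ``affected, hence reachable in $G'$'' is a definitional choice rather than a consequence --- but it is the same choice the paper makes.
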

\begin{proof}
Let $E^{+}(y) = \{ (u,v) \in E \ | \ y \in \mathit{Dom}(u) \mbox{ and } y \not\in \mathit{Dom}(v)\}$. Consider what happens when we delete the edges in $E^{+}(y)$ one by one (in any order) before deleting $(x,y)$. Let $v$ be a vertex that is affected by the deletion of $(x,y)$ such that $y \not \in \mathit{Dom}(v)$. Then there is a subsequence of deletions of edges $(x_1,y_1),(x_2,y_2),\ldots,(x_k,y_k)$ in $E^{+}(y)$ such that each deletion affects $v$. Note that every deletion $(x_i,y_i)$ in this subsequence leaves $y_i$ reachable. Therefore Lemma~\ref{lemma:deletion-reachable} applies. Each deletion increases the depths of the affected vertices and their descendants, so the result follows.
\end{proof}

As in the case of an insertion that makes a new reachable vertex, when a deletion makes a new unreachable vertex, we can consider the following options:
\begin{itemize}
\item[(a)] Collect all edges $(u,v)$ such that $u$ is a descendant of $y$ in $D$ ($y \in \mathit{Dom}(u)$) but $v$ is not, and process $(u,v)$ as a new deletion. (Equivalently we can substitute $(u,v)$ with $(y,v)$ and process $(y,v)$ as a new deletion.)
\item[(b)] Use a static algorithm to compute the immediate dominators of all possibly affected vertices identified by Lemma \ref{lemma:deletion-unreachable}.
\item[(c)] Compute $D'$ from scratch.
\end{itemize}

\subsection{Reducible flow graphs}
\label{sec:reducible}

A flow graph $G=(V,E,r)$ is \emph{reducible} if every strongly connected subgraph $S$ has a single \emph{entry} vertex $v$ such every path from $s$ to a vertex in $S$ contains $v$~\cite{rdflow:hu74,reducibility:jcss:tarjan}. Tarjan~\cite{reducibility:jcss:tarjan} gave a characterization of reducible flow graphs using dominators: A flow graph is reducible if and only if it becomes acyclic when every edge $(v, w)$ such that $w$ dominates $v$ is deleted. The notion of reducibility is important because many programs have control flow graphs that are reducible, which simplifies many computations. In our context, we can use a dynamic dominators algorithm to dynamically test flow graph reducibility.

\section{Algorithms}
\label{sec:algorithms}

Here we present new algorithms for the dynamic dominators problem. We begin in Section \ref{sec:dynsnca} with a simple dynamic version of the \algo{snca} algorithm (\algo{dsnca}), which also provides a necessary (but not sufficient) condition for an edge deletion to affect the dominator tree. Then, in Section \ref{sec:depth-based}, we present a depth-based search (\algo{dbs}) algorithm which uses the results of Section \ref{sec:properties}. We improve the efficiency of deletions in \algo{dbs} by employing a test for affected vertices used in \algo{dsnca}. In Section \ref{sec:SGL} we give an overview of the Sreedhar-Gao-Lee algorithm~\cite{SGL}.
In the description below we let $(x,y)$ be the inserted or deleted edge and assume that $x$ is reachable.

\subsection{Dynamic SNCA (DSNCA)}
\label{sec:dynsnca}

We develop a simple method to make the (static) \algo{snca} algorithm~\cite{dom_exp:gtw} dynamic, in the sense that it can respond to an edge update faster (by some constant factor) than recomputing the dominator tree from scratch. Furthermore, by storing some side information we can test if the deletion of an edge satisfies a necessary condition for affecting $D$.

The \algo{snca} algorithm
is a hybrid of the simple version of Lengauer-Tarjan (\algo{slt}) and the iterative algorithm of Cooper et al.
The Lengauer-Tarjan algorithm uses the concept of \emph{semidominators}, as an initial approximation to the immediate dominators.
It starts with a depth-first
search on $G$ from $s$ and assigns preorder numbers to
the vertices. Let $T$ be the corresponding depth-first search tree, and let $\mathit{pre}(v)$ be the preorder number of $v$. A path
$P=(u=v_0,v_1,\ldots,v_{k-1},v_k=v)$ in $G$ is a
\emph{semidominator path} if $\mathit{pre}(v_i)>\mathit{pre}(v)$
for $1 \le i \le k-1$. The semidominator of $v$, $\mathit{sd}(v)$, is defined
as the vertex $u$ with minimum $\mathit{pre}(u)$ such that there is a semidominator path from $u$ to
$v$. Semidominators and
immediate dominators are computed
by executing path-minima computations, which find minimum $\mathit{sd}$ values on paths
of $T$, using an appropriate data-structure. Vertices are processed in reverse preorder, which ensures that
all the necessary values are available when needed.
With a simple implementation of the path-minima data structure,
the algorithm \algo{slt} runs in $O(m\log{n})$ time.
With a more sophisticated strategy the algorithm
runs in $O(m\alpha(m,n))$ time.

The \algo{snca} algorithm computes dominators in two phases:
\begin{itemize}
\item[(a)] Compute $\mathit{sd}(v)$ for all $v \neq s$, as done
by {\algo{slt}}.
\item[(b)] Build $D$ incrementally as follows: Process the
vertices in preorder. For each vertex $w$, ascend the path from $t(w)$ to $s$ in $D$,
where $t(w)$ is the parent of $w$ in $T$ (the depth-first search tree), until
reaching the deepest vertex $x$ such that $\mathit{pre}(x) \le
\mathit{pre}(\mathit{sd}(w))$. Set $x$ to be the parent of $w$ in $D$.
\end{itemize}
With a na\"{\i}ve implementation, the second phase runs in $O(n^2)$ worst-case time.
However, as reported in~\cite{strong-articulation:algorithmica,dom_exp:gtw}, 
it performs much better in practice.
{\algo{snca}} is simpler than {\algo{slt}} in that it requires fewer arrays, eliminates some indirect
addressing, and there is one
fewer pass over the vertices. This makes it easier to produce a dynamic version of \algo{snca}, as described below. We note, however, that the same ideas can be applied to produce a dynamic version of \algo{slt} as well.

\paragraph{Edge insertion.}
Let $T$ be the depth-first search tree used to compute semidominators.
Let $\mathit{pre}(v)$ be the preorder number of $v$ in $T$ and $\mathit{post}(v)$ be the postorder number of $v$ in $T$; if $v \not\in T$ then $\mathit{pre}(v) (= \mathit{post}(v)) = 0$. The algorithm runs from scratch if
\begin{equation}
\label{eq:condition1}
\mathit{pre}(y)=0 \mbox{ or,  } \mathit{pre}(x)<\mathit{pre}(y) \ \mbox{ and } \ \mathit{post}(x)<\mathit{post}(y).
\end{equation}
If condition \eqref{eq:condition1} does not hold then $T$ remains a valid depth-first seach tree for $G$. If this is indeed the case then we can repeat the computation of semidominators for the vertices $v$ such that $\mathit{pre}(v) \le \mathit{pre}(y)$. To that end, for each such $v$, we initialize the value of $\mathit{sd}(v)$ to $t(v)$ and perform the path-minima computations for the vertices $v$ with $\mathit{pre}(v) \in \{2,\ldots,\mathit{pre}(y)\}$. Finally we perform the nearest common ancestor phase for all vertices $v \not= r$.

\paragraph{Edge deletion.}

In order to test if the deletion may possibly affect the dominator tree we use the following idea. For any $v \in V-s$, we define $g(v)$ to be a predecessor of $v$
that belongs to a semidominator path from $\mathit{sd}(v)$ to $v$. Such vertices can be
found easily during the computation of semi-dominators~\cite{DomCert:TALG}.

\begin{lemma}
\label{lemma:deletion-test}
The deletion of $(x,y)$ affects $D$ only if $x=t(y)$ or $x=g(y)$.
\end{lemma}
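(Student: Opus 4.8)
The plan is to prove the contrapositive: if $x \neq t(y)$ and $x \neq g(y)$, then the deletion of $(x,y)$ leaves the dominator tree unchanged, i.e., $D' = D$. The first step is a reduction. Since $x \neq t(y)$, the edge $(x,y)$ is not a tree edge of $T$, so no vertex becomes unreachable and $T$, together with its preorder numbering, remains a valid depth-first search tree of $G'$. Because the second phase of \algo{snca} builds the dominator tree purely from $T$ and from the semidominator values $\mathit{sd}(\cdot)$ (computed with respect to $T$), it therefore suffices to show that $\mathit{sd}'(v) = \mathit{sd}(v)$ for every vertex $v$.

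To localize the effect of the deletion I would invoke the Lengauer--Tarjan characterization of semidominators~\cite{domin:lt}: for each $w$, the value $\mathit{sd}(w)$ is determined by a fixed formula involving only the edges entering $w$, the tree $T$ (preorder numbers and the ancestor relation), and the values $\mathit{sd}(u)$ for vertices $u$ with $\mathit{pre}(u) > \mathit{pre}(w)$. Since $G$ and $G'$ differ only in the edge $(x,y)$, the set of edges entering $w$ is the same in both graphs for every $w \neq y$. Hence a straightforward induction on the vertices in reverse preorder shows $\mathit{sd}'(v) = \mathit{sd}(v)$ for all $v$, provided we can establish the one remaining base fact $\mathit{sd}'(y) = \mathit{sd}(y)$.

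The crux is exactly this last fact, and this is where the hypothesis $x \neq g(y)$ is used. Since $G' \subseteq G$, every semidominator path of $G'$ is also one of $G$, so $\mathit{pre}(\mathit{sd}'(y)) \geq \mathit{pre}(\mathit{sd}(y))$ is immediate. For the reverse inequality, recall that by definition $g(y)$ is a predecessor of $y$ that lies on a semidominator path from $\mathit{sd}(y)$ to $y$; truncating that path at $g(y)$ and appending the edge $(g(y),y)$ yields a semidominator path $Q$ from $\mathit{sd}(y)$ to $y$ whose last vertex before $y$ is $g(y)$. All internal vertices of $Q$ have preorder larger than $\mathit{pre}(y)$, so the only edge of $Q$ with head $y$ is $(g(y),y)$; since $g(y) \neq x$, this edge, and hence the whole path $Q$, survives in $G'$, so $\mathit{pre}(\mathit{sd}'(y)) \leq \mathit{pre}(\mathit{sd}(y))$. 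Combining the inequalities, and using that $\mathit{pre}$ is injective on the (unchanged) set of reachable vertices, gives $\mathit{sd}'(y) = \mathit{sd}(y)$, which closes the induction and hence yields $D' = D$.

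The main obstacle is not any single computation but marshalling the right structural facts in the right order: one must use the locality of the Lengauer--Tarjan recursion to confine the possible change to $\mathit{sd}(y)$ alone, and one must be slightly careful in upgrading ``$g(y)$ lies on a semidominator path from $\mathit{sd}(y)$ to $y$'' to ``$g(y)$ is the immediate predecessor of $y$ on such a path'', which is what forces $(g(y),y)$ to be the unique entry into $y$ along that path and thus guarantees its survival in $G'$. Once these are in place the argument is short.
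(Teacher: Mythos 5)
The paper states Lemma~\ref{lemma:deletion-test} without any proof (it only points to~\cite{DomCert:TALG} for the definition of $g$), so there is no in-paper argument to compare against; judged on its own, your proof is correct and supplies exactly the missing justification. The three pillars all hold: (i) since $x\neq t(y)$ the deleted edge is not a tree edge, so $T$ with its preorder remains a valid DFS tree of $G'$ and the reachable set is unchanged, and by the correctness of \algo{snca} (equivalently, by the Lengauer--Tarjan corollary expressing immediate dominators in terms of $T$ and $\mathit{sd}(\cdot)$) it indeed suffices to show the semidominators are unchanged; (ii) the Lengauer--Tarjan recursion for $\mathit{sd}(w)$ depends only on the edges entering $w$, on $T$, and on $\mathit{sd}(u)$ for $\mathit{pre}(u)>\mathit{pre}(w)$, so a reverse-preorder induction confines any possible change to the single vertex $y$; and (iii) truncating the semidominator path at $g(y)$ is legitimate (its internal vertices still have preorder exceeding $\mathit{pre}(y)$), which forces $(g(y),y)$ to be the unique edge of that path with head $y$; since $g(y)\neq x$ the path survives in $G'$ and, combined with $G'\subseteq G$, gives $\mathit{sd}'(y)=\mathit{sd}(y)$. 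You correctly flag this last step as the crux. Two cosmetic remarks: calling $\mathit{sd}'(y)=\mathit{sd}(y)$ a ``base fact'' is slightly off --- it is the one nontrivial case of the reverse-preorder induction, needed when processing vertices $w$ with $\mathit{pre}(w)<\mathit{pre}(y)$ having a predecessor that descends from $y$ in $T$; and your locality analysis actually proves a bit more than the lemma, namely that when $x=g(y)$ only the $\mathit{sd}$-values of vertices $v$ with $\mathit{pre}(v)\le\mathit{pre}(y)$ can change, which is precisely what justifies the paper's restricted recomputation range $\mathit{pre}(v)\in\{2,\ldots,\mathit{pre}(y)\}$ in that case.
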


If $x=t(y)$ we run the whole \algo{snca} algorithm from scratch. Otherwise, if $x=g(y)$, then we perform the path-evaluation phase for the vertices $v$ such that $\mathit{pre}(v) \in \{2,\ldots,\mathit{pre}(y)\}$. Finally we perform the nearest common ancestor phase for all vertices $v \not= s$. We note that an insertion or deletion takes $\Omega(n)$ time, since our algorithm needs to reset some arrays of size $\Theta(n)$. Still, as the experimental results given in Section \ref{sec:experimental} show, the algorithm offers significant speedup compared to running the static algorithm from scratch.

\subsection{Depth-based search (DBS)}
\label{sec:depth-based}

This algorithm uses the results of Section \ref{sec:properties} and ideas from~\cite{dynamicdominator:AL,SGL} and \algo{dsnca}. Our goal is to search for affected vertices using the depth of the vertices in the dominator tree, and improve batch processing in the unreachable cases (when $y$ is unreachable before the insertion or $y$ becomes unreachable after a deletion).

\paragraph{Edge insertion.}

In order to locate the vertices that are affected by the insertion of edge $(x,y)$, we start a search in $G$ from $y$ and follow paths that satisfy Lemma~\ref{lemma:insert-affected}.
We say that a vertex $v$ is \emph{scanned}, if the edges leaving $v$ are examined during the search for affected vertices.
Also, we say that $v$ is \emph{visited} if there is a scanned vertex $u$ such
that $(u,v)$ is an edge in $G$ that was examined while scanning $u$.
With each vertex $v$, we store two bits to indicate if $v$ was found to be affected and if $v$ was scanned.

As in \cite{SGL}, we maintain a set of affected vertices, sorted by their depth in $D$.
To do this efficiently, we maintain an array $A$ of $n$ buckets, where bucket $A[i]$ stores the affected vertices $v$ with $\mathit{depth}(v)=i$.
When we find a new affected vertex $v$ we insert it into $A[\mathit{depth}(v)]$.
We also maintain the most recently scanned affected vertex $\widehat{v}$, and the affected level $\widehat{\ell}=\mathit{depth}((\widehat{v}))$.

Initially, all vertices are marked as not affected and not scanned. Also, all buckets $A[i]$ are empty.
When $(x,y)$ is inserted, we locate $z=\mathit{nca}_D(x,y)$ and test if $\mathit{depth}(z) < \mathit{depth}(d(y))$.
If this is the case, then we mark $y$ as affected and insert it into bucket $A[\mathit{depth}(y)]$.
While there is a non-empty bucket, we locate the largest index $\ell$ such that $A[\ell]$ is not empty, and
extract a vertex $v$ from $A[\ell]$. Then, we set $\widehat{v} = v$ and $\widehat{\ell}=\mathit{depth}(v)$, and scan $v$.
To scan a vertex $v$, we examine the edges $(v,w)$ that leave $v$.
Let $(v,w)$ be the an edge that we examine.
If $\mathit{depth}(w) > \widehat{\ell}$ then we recursively scan $w$ if it was not scanned before.
If $\mathit{depth}(\mathit{nca}_D(x,y)) + 1 < \mathit{depth}(w) \le \widehat{\ell}$ then we mark $w$ as affected and insert $w$ into bucket
$A[\mathit{depth}(w)]$.

\begin{lemma}
\label{lemma:invariants}
During the insertion of an edge $(x,y)$, where $y$ is affected, algorithm \algo{dbs} maintains the following invariants:
\begin{itemize}
\item[(1)] The affected level $\widehat{\ell}$ is non-increasing, and $\widehat{\ell} > \mathit{depth}(\mathit{nca}_D(x,y))+1$.
\item[(2)] For any scanned vertex $v$, $\mathit{depth}(v) \ge \widehat{\ell}$.
\item[(3)] Suppose $v$ is scanned when the affected level is $\widehat{\ell}$. Then, there is a path from $y$ to $v$ that contains only vertices of depth $\widehat{\ell}$ or higher.
\item[(4)] If there is a path from $y$ to $v$ that contains vertices of minimum depth $\ell > \mathit{depth}(\mathit{nca}_D(x,y)) + 1$, then $v$ is scanned when the affected level is $\widehat{\ell} \ge \ell$.
\item[(5)] Any vertex is scanned at most once, and a scanned vertex is a descendant in $D$ of an affected vertex.
\end{itemize}
\end{lemma}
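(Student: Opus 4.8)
The plan is to prove all five invariants together by induction on the sequence of \emph{events} that \algo{dbs} performs while inserting $(x,y)$: the initial marking of $y$ as affected; each \emph{extraction} of a vertex from the deepest non-empty bucket, which resets $\widehat{v}$ and $\widehat{\ell}$; and each examination of an outgoing edge during a (possibly recursive) scan. Two bookkeeping facts from the description of the algorithm will be used repeatedly, and I would record them first: a vertex is inserted into a bucket only the first time it is found affected, and a vertex is scanned only while its scanned bit is unset; together with the fact that recursion descends only to strictly deeper, not-yet-scanned vertices, these give termination and the first half of~(5) (no vertex is scanned twice), and also that every affected vertex is eventually extracted and scanned. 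For~(1): the first value of $\widehat{\ell}$ is $\mathit{depth}(y)$, and the test that starts the search, $\mathit{depth}(\mathit{nca}_D(x,y)) < \mathit{depth}(d(y)) = \mathit{depth}(y) - 1$, already gives $\widehat{\ell} > \mathit{depth}(\mathit{nca}_D(x,y)) + 1$; every vertex $w$ marked affected afterwards satisfies $\mathit{depth}(\mathit{nca}_D(x,y)) + 1 < \mathit{depth}(w) \le \widehat{\ell}$ by the condition under which it is marked, so, since the main loop always extracts from the deepest non-empty bucket, the next value of $\widehat{\ell}$ never exceeds the previous one and never drops to $\mathit{depth}(\mathit{nca}_D(x,y)) + 1$ or below. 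Invariant~(2) is similar: an extracted vertex becomes $\widehat{v}$ with depth exactly $\widehat{\ell}$, a vertex scanned recursively inside that phase has depth $> \widehat{\ell}$ by the recursion condition, and by~(1) a vertex scanned in an earlier phase had depth at least the then-current, hence at least the current, value of $\widehat{\ell}$.

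For the second half of~(5) I would show that every scanned vertex lies in the $D$-subtree of an affected vertex. The extracted vertex $\widehat{v}$ is affected by definition, so it suffices to show that a vertex scanned recursively from $\widehat{v}$ is a descendant of $\widehat{v}$ in $D$. Such a vertex is the last vertex of a path $\widehat{v} = w_0 \to w_1 \to \cdots \to w_k$ in $G$ along which $\mathit{depth}(w_i) > \mathit{depth}(\widehat{v})$ for every $i \ge 1$. I would prove by induction on $i$ that each $w_i$ is a descendant of $\widehat{v}$: if $w_{i-1}$ is, then $w_{i-1}$ is a common descendant of $\widehat{v}$ and of $d(w_i)$ (the latter by the parent property of $D$ applied to the edge $(w_{i-1}, w_i)$), so $\widehat{v}$ and $d(w_i)$ are comparable in $D$; if $d(w_i)$ were a proper ancestor of $\widehat{v}$ we would get $\mathit{depth}(w_i) = \mathit{depth}(d(w_i)) + 1 \le \mathit{depth}(\widehat{v})$, contradicting $\mathit{depth}(w_i) > \mathit{depth}(\widehat{v})$; hence $d(w_i)$ is a descendant of $\widehat{v}$, and so is its child $w_i$.

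Invariant~(3) I would prove by induction on the order in which vertices are scanned. The base case is $v = y$, scanned first at level $\mathit{depth}(y)$, with the one-vertex path. If $v$ is scanned as an extracted vertex (so $\widehat{\ell} = \mathit{depth}(v)$), then $v$ was first marked affected while some earlier-scanned vertex $u$ was being scanned at a level $\widehat{\ell}_u \ge \mathit{depth}(v)$; appending the edge $(u, v)$ to the path from $y$ to $u$ supplied by the induction hypothesis (whose vertices all have depth $\ge \widehat{\ell}_u \ge \mathit{depth}(v)$) gives the required path. If $v$ is scanned recursively from $\widehat{v}$, then concatenating the path for $\widehat{v}$ with the depth-$>\widehat{\ell}$ path from $\widehat{v}$ to $v$ produced inside the current phase works.

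For~(4), given a path $P = (y = u_0, u_1, \ldots, u_j = v)$ whose minimum vertex depth is $\ell > \mathit{depth}(\mathit{nca}_D(x,y)) + 1$, I would prove by induction on $i$ that $u_i$ is scanned at an affected level $\ge \ell$; the case $i = j$ is the statement. The base $u_0 = y$ is scanned first at level $\mathit{depth}(y) \ge \ell$. For the step, when $u_i$ is scanned at level $\widehat{\ell}_i \ge \ell$ the edge $(u_i, u_{i+1})$ is examined, and since $\mathit{depth}(u_{i+1}) \ge \ell > \mathit{depth}(\mathit{nca}_D(x,y)) + 1$ the ``do nothing'' branch cannot apply: either $\mathit{depth}(u_{i+1}) > \widehat{\ell}_i$, so $u_{i+1}$ is scanned recursively now at level $\widehat{\ell}_i$ (or was already scanned, necessarily at a level $\ge \widehat{\ell}_i$ by~(1)), or $\mathit{depth}(\mathit{nca}_D(x,y)) + 1 < \mathit{depth}(u_{i+1}) \le \widehat{\ell}_i$, so $u_{i+1}$ is marked affected and hence, by the bookkeeping facts, eventually extracted and scanned at level $\mathit{depth}(u_{i+1}) \ge \ell$ (or was already scanned at a level $\ge \widehat{\ell}_i$); either way $u_{i+1}$ is scanned at a level $\ge \ell$. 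I expect the main obstacle to be exactly this interplay between the event ordering and the bucket/recursion bookkeeping — in particular, ruling out that a vertex is scanned recursively and only afterwards re-inserted into a bucket (or extracted a second time), which is what makes both~(5) and the ``eventually scanned'' step of~(4) go through cleanly; the only genuinely structural ingredient is the use of the parent property in~(5).
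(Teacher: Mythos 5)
Your proof is correct, and for the heart of the matter --- invariant (4) --- it is essentially the paper's argument: the paper proves (4) by taking a shortest violating path and deriving a contradiction from the predecessor being scanned, which is just the contrapositive of your forward induction along the path, and the case analysis you spell out (recursive scan if $\mathit{depth}(u_{i+1}) > \widehat{\ell}$, bucket insertion and later extraction otherwise, with already-scanned vertices handled via monotonicity of $\widehat{\ell}$) is exactly what the paper's one-line step ``so the edge $(u,v)$ is examined \ldots\ $v$ will be scanned when $\widehat{\ell} \ge \ell$'' is silently relying on. Where you genuinely diverge is in how much you prove: the paper declares (1)--(3) immediate from the algorithm and asserts that (5) is implied by (4), whereas you give explicit arguments for all five. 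Your treatment of the second half of (5) --- that a recursively scanned vertex is a $D$-descendant of the currently extracted affected vertex $\widehat{v}$, via the parent property applied edge by edge along the high-depth path --- is the one place where a structural fact about dominator trees is actually needed, and it does not follow from (4) in any direct way; your argument is essentially a re-derivation of Lemma~\ref{lemma:wide-path}, which you could have cited instead of reproving. So your write-up is a strictly more complete version of the paper's proof; the only mild redundancy is that appeal to the parent property, which the paper already packages as Lemma~\ref{lemma:wide-path}.
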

\begin{proof}
Invariants (1), (2), and (3) follow immediately from the description of the algorithm.
Also, invariants (5) is implied by invariant (4). So it suffices to prove
that the algorithm maintains invariant (4).

Let $v$ be a vertex such that there is a path $P$ from $y$ to $v$ with
minimum vertex depth $\ell > \mathit{depth}(\mathit{nca}_D(x,y))+1$. Assume, for contradiction, that $v$ is not scanned
for $\widehat{\ell} \ge \ell$.
Choose $v$ so that the length of $P$ is minimum. Let $u$ be the vertex that precedes $v$ on $P$.
Since $y$ is scanned, $v \not= y$ and so vertex $u$ exists.
Then, by the choice of $v$, we have that $u$ is scanned.
So the edge $(u,v)$ is examined, and since $\mathit{depth}(\mathit{nca}_D(x,y))+1 < \ell \le \mathit{depth}(v)$,
$v$ will be scanned when $\widehat{\ell} \ge \ell$, a contradiction.
\end{proof}


The correctness of our algorithm follows by invariant (5), which implies that all affected vertices will be detected.

Now suppose that $y$ was unreachable before the insertion. Then,
we can apply one of the approaches mentioned in Section \ref{sec:insertion}.
In order to provide a good worst-case bound for
a sequence of $k$ edge insertions, we will assume that we compute $D'$ from scratch.

\begin{theorem}
\label{theorem:dbs-insertions}
Algorithm \algo{dbs} maintains the dominator tree of a flow graph through a sequence of $k$ edge insertions in $O(m \min\{k,n\} + kn)$ time, where $n$ is the number of vertices and $m$ is the number of edges after all insertions.
\end{theorem}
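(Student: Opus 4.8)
The plan is to bound separately the work done over the sequence of $k$ insertions in two regimes. Split the sequence into two types of operations: (i) insertions where both endpoints of the inserted edge are already reachable, handled by the depth-based search guided by Lemma~\ref{lemma:insert-affected}; and (ii) insertions where $y$ was previously unreachable, handled by recomputing $D'$ from scratch. For type (ii), each such insertion costs $O(m)$ using a linear-time (or near-linear) static dominators algorithm, and there can be at most $n-1$ such insertions over the whole sequence, since each one strictly increases the number of reachable vertices and once a vertex is reachable it stays reachable (edges are only inserted). This contributes $O(mn)$, which is absorbed into $O(m\min\{k,n\})$ when $k \ge n$, and is at most $O(mk)$ trivially; more carefully, the number of type-(ii) insertions is $\min\{k, n-1\}$, giving $O(m\min\{k,n\})$ for this part.

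For type (i) insertions, the main point is to amortize the cost of the depth-based searches. By Lemma~\ref{lemma:invariants}(5), every scanned vertex is a (strict, by Lemma~\ref{lemma:nca}) descendant in $D$ of an affected vertex, and every affected vertex $v$ has $d'(v) = \mathit{nca}_D(x,y)$, so its depth in the dominator tree strictly decreases with this insertion. The cost of one depth-based search is $O(n)$ for the bucket-array overhead plus $O(\sum_{v \text{ scanned}} (\deg(v) + \text{nca cost}))$ for scanning. The $O(n)$ per-insertion overhead over $k$ insertions gives $O(kn)$, matching the stated bound. For the scanning cost, I would use a potential argument: the total depth $\Phi = \sum_{v} \mathit{depth}_D(v)$ is at most $n^2$ initially (in fact at most $\binom{n}{2}$), never increases under insertions (insertions can only move vertices up, by Proposition~\ref{proposition:update-properties} and Lemma~\ref{lemma:nca}), and each affected vertex decreases its depth by at least one. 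Since a scanned vertex is a descendant of an affected vertex, I need to relate the number of scanned vertices to the number of affected vertices — this is where Lemma~\ref{lemma:wide-path} and Lemma~\ref{lemma:invariants}(3) come in: a scanned vertex $v$ lies on a path from $y$ consisting of vertices of depth $\ge \widehat\ell$, hence (by the wide-path lemma applied appropriately) $v$ is confined to the subtree hanging below the affected vertices at the current level. The upshot is that each insertion scans $O(n)$ vertices, and each scanned vertex $v$ contributes $O(\deg(v))$; summing over a single insertion gives $O(m)$, and over all type-(i) insertions this is $O(mk)$, but one also wants the $O(mn)$ alternative bound.

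The $O(m \cdot n)$ bound on the scanning work over the whole sequence is the subtle part, and I expect it to be the main obstacle. The idea is: fix a vertex $w$; each time an edge leaving $w$ is examined, $w$ is being scanned, hence (Lemma~\ref{lemma:invariants}(5)) $w$ is a descendant of an affected vertex, and I want to charge this examination to a strict decrease in $\mathit{depth}_D(w)$. The difficulty is that $w$ being scanned does not by itself force $\mathit{depth}_D(w)$ to decrease — only affected vertices are guaranteed to move up. The resolution (following Alstrup--Lauridsen and the refinement here) is to observe that when $w$ is scanned but not itself affected, the algorithm does detect as affected some vertex on the path below, and over the sequence the total number of (vertex, insertion) pairs where the vertex is scanned is $O(n)$ per insertion anyway — so one cannot do better than $O(mk)$ this way directly; instead, for the $O(mn)$ branch one argues that once $k \ge n$, the "from scratch" recomputations already dominate. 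So the clean statement is: the total running time is $O(n)$ per insertion for bookkeeping ($O(kn)$ total), plus $O(m)$ per insertion for scanning and nca queries ($O(mk)$ total), plus the from-scratch recomputations; and since a from-scratch recomputation in $O(m)$ time can always be substituted for a single update, when $k > n$ one simply rebuilds from scratch after every $\lceil k/n \rceil$-th... more precisely, the bound $O(m\min\{k,n\})$ for the scanning part follows because if $k \le n$ it is $O(mk)$, and if $k > n$ one may instead, whenever the accumulated scanning work would exceed $O(mn)$, rebuild from scratch — but a cleaner route, and the one I would write up, is the potential argument: the scanning work charged to vertex $w$ across all insertions is $O(\deg(w))$ times the number of distinct depth values $w$ takes, which is $O(\deg(w) \cdot n)$ in the worst case but $O(\deg(w) \cdot k)$ trivially, giving $\min$; combined with the per-insertion $O(n)$ overhead this yields $O(m\min\{k,n\} + kn)$, as claimed.
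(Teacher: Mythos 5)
Your decomposition matches the paper's: insertions with $y$ previously unreachable are handled by a from-scratch recomputation, and there are at most $\min\{k,n\}$ of them, contributing $O(m\min\{k,n\})$; insertions with both endpoints reachable contribute $O(n)$ each for the nca computation and bucket maintenance (hence $O(kn)$ total), plus the scanning work, which must be amortized. The problem is that the amortization of the scanning work is exactly the point you flag as ``the main obstacle'' and then assert your way past. Your final claim --- that the work charged to a vertex $w$ is $O(\mathit{outdeg}(w))$ times the number of distinct depth values $w$ takes --- is only valid if every time $w$ is \emph{scanned} its depth in $D$ strictly decreases; you explicitly state that you do not see why a scanned-but-unaffected vertex must move up, and your proposed workaround (rebuilding from scratch once the accumulated scanning work gets too large) changes the algorithm and therefore does not prove the stated theorem about \algo{dbs}. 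The potential $\Phi=\sum_v \mathit{depth}(v)$ does not rescue this either, since per insertion it decreases by (at least) the number of \emph{affected} vertices, not the number of \emph{scanned} vertices.

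The missing step is closed as follows, and it is what the paper's appeal to invariant (5) of Lemma~\ref{lemma:invariants} is doing. A scanned vertex $w$ is a descendant in $D$ of some affected vertex $v$. By Lemma~\ref{lemma:nca}, $d(v)$ ceases to dominate $v$, so in $G'$ there is a path from $s$ to $v$ avoiding $d(v)$. By Lemma~\ref{lemma:wide-path} there is a path from $v$ to $w$ in $G$ all of whose vertices are descendants of $v$ in $D$, hence avoiding $d(v)$ as well. Concatenating the two paths shows $d(v)\notin \mathit{Dom}'(w)$; since an edge insertion can only remove dominators, $\abs{\mathit{Dom}'(w)}<\abs{\mathit{Dom}(w)}$, i.e., $\mathit{depth}'(w)<\mathit{depth}(w)$. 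Therefore each vertex is scanned at most once per insertion and at most $n$ times over the whole sequence, i.e., at most $\min\{k,n\}$ times, and charging $O(1+\mathit{outdeg}(w))$ per scan gives $\sum_w O\bigl((1+\mathit{outdeg}(w))\min\{k,n\}\bigr)=O(m\min\{k,n\})$ for all the searches. This is precisely the charging scheme you wanted; without the strict depth decrease for scanned vertices it is unsupported, so as written the proposal has a genuine gap at its central step.
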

\begin{proof}
Let $(x,y)$ be an edge that is inserted into $G$ during the insertion sequence.
We test can if $x$ and $y$ are reachable from $s$ before the insertion in $O(1)$ time, and then consider the following cases:
\begin{itemize}
\item[(a)] $x$ is unreachable. We only need to update the adjacency lists of $G$, which takes $O(1)$ time.

\item[(b)] $x$ is reachable and $y$ is unreachable. We compute the dominator tree from scratch in $O(m)$ time. Such an event can happen at most $\min\{k,n\}$ times throughout the sequence, so the total time spent on these type of insertions is $O(m \min\{k,n\})$.

\item[(c)] $x$ and $y$ are reachable. We first compute $\mathit{nca}_D(x,y)$ in $O(n)$ time, just by following parent pointers in $D$.
If $y$ is affected, then we execute the depth-based search algorithm to locate the affected vertices. Let $\nu$ be the number of scanned vertices, and let
$\mu$ be the total number of edges leaving a scanned vertex. Excluding the time to needed to maintain the buckets $A[i]$, the search for affected vertices takes $O(\nu+\mu)$ time. So we can charge a cost of $O(1+\mathit{outdeg}(v))$ to each scanned vertex $v$. To bound the total time for all insertions of this type we note that by invariant (5) of Lemma \ref{lemma:invariants}, each time a vertex is scanned its depth in the dominator tree will decrease by at least one. Also, by the same invariant we have that each vertex will be scanned at most $\min\{k,n\}$ times. Hence, the total cost per vertex is
$\mathit{cost}(v) = O((1+\mathit{outdeg}(v))\min\{k,n\})$.
Finally, we consider the time required to maintain the buckets $A[i]$. Each affected vertex $v$ is inserted into and deleted from a single bucket $A[i]$,
and each such operation takes constant time. It remains to bound the time required to locate the nonempty buckets. Invariant (1) of Lemma \ref{lemma:invariants} implies that we need to test if $A[i]$ is not null only once for each depth $i$. Hence, we can maintain all buckets in $O(n)$ time.
We conclude that the total time spent on all insertions of type (c) is bounded by $O(kn) + \sum_v{\mathit{cost}(v)} =
O(m \min\{k,n\} +kn)$.
\end{itemize}
Hence we get a total $O(m \min\{k,n\} + kn)$ bound for all $k$ insertions.
\end{proof}

It is straightforward to extend our algorithm so that it can answer in constant time the following type of queries:
Given two vertices $u$ and $v$, test if $u$ dominates $v$ in $G$.
We can do this test using an $O(1)$-time test of the ancestor-descendant relation in $D$~\cite{domin:tarjan}.
E.g., we can number the vertices of $D$ from $1$ to $n$ in preorder and compute the number of descendants of each vertex $w$;
we denote these numbers by $\mathit{pre}(w)$ and $\mathit{size}(w)$, respectively.
Then $v$ is a descendant of $u$ if and only if $\mathit{pre}(u) \le \mathit{pre}(v) < \mathit{pre}(u) + \mathit{size}(u)$.
We can recompute these numbers in $O(n)$ time after each insertion, so the bound of Theorem \ref{theorem:dbs-insertions}
is maintained.

\paragraph{Edge deletion.}

We describe a method that applies \algo{snca}.
If $y$ is still reachable after the deletion then we execute \algo{snca} for the subgraph induced by $d(y)$. Now suppose $y$ becomes unreachable. Let $E^{+}(y) = \{ (u,v) \in E \ | \ y \in \mathit{Dom}(u) \mbox{ and } y \not\in \mathit{Dom}(v)\}$ and let $V^{+}(y) = \{ v \in V \ | \ \mbox{there is an edge } (u,v) \in E^{+}(y) \}$. We compute $V^{+}(y)$ by executing a depth-first search from $y$, visiting only vertices $w$ with $\mathit{depth}(w) \ge \mathit{depth}(y)$. At each visited vertex $w$ we examine the edges $(w,v)$ leaving $w$; $v$ is included in $V^{+}(y)$ if $\mathit{depth}(v) \le \mathit{depth}(y)$. Next, we find a vertex $v \in V^{+}(y)$ of minimum depth such that $v \not \in \mathit{Dom}(y)$. Finally we execute \algo{snca} for the subgraph induced by $d(v)$. Lemma~\ref{lemma:deletion-unreachable} implies the correctness of this method. A benefit of this approach is that we can apply Lemma~\ref{lemma:deletion-test} to test if the deletion may affect $D$. Here we have the additional complication that we can maintain the $t(y)$ and $g(y)$ values required for the test only for the vertices $y$ such that $d(y)$ was last computed by \algo{snca}. Therefore, when an insertion affects a reachable vertex $y$ we set $t(y)$ and $g(y)$ to $\mathit{null}$ and cannot apply the deletion test if some edge entering $y$ is deleted.

\subsection{Sreedhar-Gao-Lee algorithm}
\label{sec:SGL}

This algorithm uses the DJ-graph structure~\cite{SG:phinodes} to allow fast search of the affected vertices.
The DJ-graph maintains two sets of edges, $E_d$  which stores the dominator tree edges ($d$-edges), and $E_j$ which stores the edges in $E \setminus E_d$, called \emph{join-edges} ($j$-edges). The two sets are stored in different adjacency lists to allow fast search of the vertices affected by the update operations. The affected vertices are a subset of the \emph{iterated dominance frontier}~\cite{IDF} of $y$, denoted as $\mathit{IDF(y)}$. The \emph{dominance frontier}, $\mathit{DF(y)}$, of a vertex $y$ is the set of vertices $z$ such that $y$ dominates a predecessor of $z$ but does not properly dominate $y$. For a set of vertices $S \subseteq V$ we define $\mathit{DF(S)}=\bigcup_{z \in S}\mathit{DF(z)}$. Then, $\mathit{IDF(S)}$ is the limit of $\mathit{IDF}_{i}(S)$, defined by the recursion $\mathit{IDF}_{i+1}(S)=\mathit{DF}(S \cup \mathit{IDF}_{i}(S))$ and $\mathit{IDF}_{1}(S) = \mathit{DF}(S)$. The algorithm also maintains the depth of each vertex in $D$.

\paragraph{Edge insertion.}
As shown in~\cite{SGL} the affected vertices are exactly those in $\mathit{IDF(y)}$ that satisfy $\mathit{depth}(z) > \mathit{depth}(\mathit{nca}_D(x,y))+1$. To compute this set, the algorithm maintains a set of affected vertices $A$ sorted by their depth in $D$, using buckets. (We used the same idea in \algo{dbs}.)
Initially $A=\{y\}$, and while $A$ is not empty, a vertex $v \in A$ with maximum depth is extracted and processed. To process $v$, the algorithm visits the subtree of $D$ rooted at $v$ (using the $d$-edges) and at each visited vertex $w$ it looks at the leaving $j$-edges $(w,u)$. If $\mathit{depth}(v) \ge \mathit{depth}(u) > \mathit{depth}(\mathit{nca}_D(x,y))+1$ then $u$ is affected and is inserted into $A$. The inequality $\mathit{depth}(u) \le \mathit{depth}(v)$ makes it easy to extract in amortized constant time a vertex in $A$ with maximum depth to be processed next. The case where $y$ was unreachable is handled as described in Section~\ref{sec:properties}. That is, the algorithm computes the dominator tree $D(y)$ induced by $R(y)$, using the algorithm of Cooper et al. Then it adds the edge $(x,y)$ in $D$, and processes each edge $(x',y')$ with $x' \in R(y)$ and $y' \not\in R(y)$ as a new insertion.

\paragraph{Edge deletion.}
The deletion needs to consider the vertices of $v \in \mathit{IDF}(y)$ with  $\mathit{depth}(v) = \mathit{depth}(y)$. These are siblings of $y$ in $D$ but not all of them are necessarily affected. The method to handle deletions suggested in~\cite{SGL} applies the searching procedure as in the insertion case to identify a set of possibly affected vertices. Then it uses the
set-intersecting iterative algorithm for the possibly affected vertices to compute their dominators.\footnote{In~\cite{SGL} this algorithm is incorrectly cited as the Purdom-Moore algorithm~\cite{dom:pm72}.} In our implementation we use a simpler idea that improves the performance of deletions. Namely, we can apply Lemma~\ref{lemma:deletion-reachable} to find the set of possibly affected vertices and then use the algorithm of Cooper et al. to compute the immediate dominators of this set of vertices. The algorithm handles the case where $y$ becomes unreachable after $(x,y)$ is deleted as described in Section~\ref{sec:properties}. Using depth-first search we can collect all edges $(u,v)$ such that $u$ is a descendant of $y$ in $D$ but $v$ is not, and process $(u,v)$ as a new deletion.

\section{Experimental Evaluation}
\label{sec:experimental}

\subsection{Implementation and Experimental Setup}

We evaluate the performance of four algorithms: the simple version of Lengauer-Tarjan (\algo{slt}), dynamic \algo{snca} (\algo{dsnca}), an efficient implementation of Sreedhar-Gao-Lee (\algo{sgl})
and the depth-based search algorithm (\algo{dbs}). In this setting, \algo{slt} runs the simple version of the Lengauer-Tarjan algorithm after each edge $(x,y)$ update, but only if $x$ is currently reachable from $s$. We do not report running times for static \algo{snca} as they are very close to those of \algo{slt}.
We
implemented all algorithms in C++. They take as input the graph
and its root, and maintain an $n$-element array representing
immediate dominators. Vertices are assumed to be integers from 1 to
$n$.
The code was compiled using {\tt g++} v. 3.4.4 with full
optimization (flag {\tt -O4}). All tests were conducted on an
Intel Core i7-920 at 2.67GHz with 8MB cache, running Windows Vista Business Edition.
We report CPU times measured with the
\texttt{getrusage} function. \ignore{Since the precision of \texttt{getrusage} is only 1/60
second, we ran each algorithm repeatedly 100 times;
individual times were obtained by dividing the total time by the
number of runs. Note that this strategy artificially reduces the number
of cache misses for all algorithms, since the graphs are usually small.}
Running times include allocation and deallocation of arrays and linked lists, as
required by each algorithm, but do not include reading the graph from an input file.
Our source code is available upon request.

\begin{table}
\begin{center}
\resizebox{\textwidth}{!}{
\begin{tabular}{rrrrrrrrrrrrrrrr}
\hline
\multicolumn{1}{l}{\tabhead graph} && \multicolumn{2}{r}{\tabhead instance} && \multicolumn{1}{r}{\tabhead insertions} && \multicolumn{1}{r}{\tabhead deletions} && \multicolumn{1}{r}{\tabhead slt} && \multicolumn{1}{r}{\tabhead dsnca} && \multicolumn{1}{r}{\tabhead sgl} && \multicolumn{1}{r}{\tabhead dbs} \\
                            && $i$& $d$ &&     &&     &&       &&        &&        &&        \\
\hline
\multicolumn{1}{l}{\textsf{uloop}}   && 10 &  0  && 315 &&   0 && 0.036 && 0.012  && 0.008  && \textbf{0.006}  \\
\multicolumn{1}{l}{$n=580$} &&  0 & 10  &&   0 && 315 && 0.039 && \textbf{0.012}  && 0.059  && \textbf{0.012}  \\
\multicolumn{1}{l}{$m=3157$}&& 10 & 10  && 315 && 315 && 0.065 && 0.017  && 0.049  && \textbf{0.014}  \\
                            && 50 &  0  && 1578&&   0 && 0.129 && 0.024  && 0.013  && \textbf{0.012}  \\
                            &&  0 & 50  &&    0&& 1578&& 0.105 && 0.024  && 0.121  && \textbf{0.023}  \\
                            && 50 & 50  && 1578&& 1578&& 0.067 && \textbf{0.015}  && 0.041  && 0.033  \\
                            && 100&  0  && 3157&&    0&& 0.178 && 0.033  && 0.023  && \textbf{0.019}  \\
                            && 0  & 100 &&    0&& 3157&& 0.120 && \textbf{0.026}  && 0.136  && 0.050  \\
\hline
\multicolumn{1}{l}{\textsf{baydry}}    && 10 &  0  && 198&&    0 && 0.010&&  \textbf{0.002} &&  0.004  && 0.003 \\
\multicolumn{1}{l}{$n=1789$}  &&  0 & 10  &&   0 &&  198 && 0.014 &&  \textbf{0.003} &&  0.195  && 0.004 \\
\multicolumn{1}{l}{$m=1987$}  && 10 & 10  && 198 &&  198 && 0.020 &&  \textbf{0.003} &&  0.012  && 0.005 \\
                              && 50 &  0  &&  993&&    0 && 0.034 && 0.009 && \textbf{0.007} &&  0.008 \\
                              &&  0 & 50  &&    0&&  993 && 0.051 && \textbf{0.007} && 0.078 &&  0.012 \\
                              && 50 & 50  &&  993&&  993 && 0.056 && \textbf{0.006} && 0.033 &&  0.020 \\
                              && 100&  0  && 1987&&    0 && 0.048 && \textbf{0.004} && 0.013 &&  0.015 \\
                              && 0  & 100 &&    0&& 1987 && 0.064 && \textbf{0.010} && 0.095 &&  0.022 \\
\hline
\multicolumn{1}{l}{\textsf{rome99}}    && 10 &  0  && 887&&    0 && 0.261&&  0.106 && 0.027 &&  \textbf{0.017} \\
\multicolumn{1}{l}{$n=3353$}  &&  0 & 10  &&   0&&  887 && 0.581&&  \textbf{0.252} && 1.861 &&  0.291 \\
\multicolumn{1}{l}{$m=8870$}  && 10 & 10  && 887&&  887 && 0.437&&  0.206 && 0.863 &&  \textbf{0.166} \\
                              && 50 &  0  &&4435&&    0 && 0.272&&  0.106 && 0.049 &&  \textbf{0.031} \\
                              &&  0 & 50  &&   0&& 4435 && 1.564&&  \textbf{0.711} && 4.827 &&  0.713 \\
                              && 50 & 50  && 4435&& 4435 &&0.052&&  \textbf{0.016} && 0.074 &&  0.065 \\
                              && 100&  0  && 8870&&    0 &&0.288&&  0.103 && 0.068 &&  \textbf{0.056} \\
                              && 0  & 100 &&    0&& 8870 &&1.274&&  \textbf{0.613} && 4.050 &&  0.639 \\
\hline
\multicolumn{1}{l}{\textsf{s38584}}    && 10 &  0  && 3449&&    0 && 6.856 &&  2.772 &&  0.114  && \textbf{0.096}  \\
\multicolumn{1}{l}{$n=20719$} &&  0 & 10  &&    0&& 3449 && 7.541 &&  \textbf{4.416} && 15.363  && 4.803  \\
\multicolumn{1}{l}{$m=34498$} && 10 & 10  && 3449&& 3449 && 6.287 &&  3.586 &&  5.131  && \textbf{2.585}  \\
                              && 50 &  0  &&17249&&    0 && 9.667 &&  3.950 &&  0.228  && \textbf{0.150}  \\
                              &&  0 & 50  &&    0&&17249 &&10.223 &&  \textbf{5.671} && 17.543  && 5.835  \\
                              && 50 & 50  &&17249&&17249 && 0.315 &&  \textbf{0.107} &&  0.342  && 0.291  \\
                              && 100&  0  &&34498&&    0 &&10.477 &&  4.212 &&  0.301  && \textbf{0.285}  \\
                              && 0  & 100 &&    0&&34498 &&10.931 &&  6.056 && 18.987  && \textbf{6.016}  \\
\hline
\multicolumn{1}{l}{\textsf{p2p-Gnutella25}}    && 10 &  0  &&5470 &&    0 && 38.031&&  9.295 &&  0.167  &&  \textbf{0.123} \\
\multicolumn{1}{l}{$n=22687$}         &&  0 & 10  &&   0 && 5470 && 38.617&& \textbf{13.878} && 38.788  && 16.364 \\
\multicolumn{1}{l}{$m=54705$}         && 10 & 10  &&5470 && 5470 && 72.029&& 21.787 && 37.396  && \textbf{14.767} \\
                                      && 50 &  0  &&27352&&    0 &&129.668&& 37.206 &&  0.415  &&  \textbf{0.256} \\
                                      &&  0 & 50  &&    0&&27352 &&133.484&& \textbf{49.730} &&131.715  && 51.764 \\
                                      && 50 & 50  &&27352&&27352 && 60.776&& 27.996 && 28.478  && \textbf{19.448} \\
                                      && 100&  0  &&54705&&    0 &&136.229&& 39.955 &&  0.724  &&  \textbf{0.468} \\
                                      && 0  & 100 &&    0&&54705 &&128.738&& 54.405 &&139.449  && \textbf{44.064} \\
\hline
\end{tabular}
}
\vspace{.2cm}
\caption{Average running times in seconds for 10 seeds. The best result in each row is bold.} 
\label{tab:runningtimes} 
\end{center}
\end{table}

\subsection{Instances and Evaluation}
\label{sec:instances}

Our test set consists of a sample of graphs used in~\cite{dom_exp:gtw}, graphs taken from the Stanford Large Network Dataset Collection~\cite{StanfordGraphs}, and road networks~\cite{rome99}. We report running times for a representative subset of the above test set, which consist of the following: the control-flow graph \textsf{uloop} from the SPEC 2000 suite created by the IMPACT compiler, the foodweb \textsf{baydry}, the VLSI circuit \textsf{s38584} from the ISCAS'89 suite, the peer-to-peer network \textsf{p2p-Gnutella25}, and the road network \textsf{rome99}. We constructed a sequence of update operations for each graph by simulating the update operations as follows.
Let $m$ be the total number of edges in the graph. We define parameters $i$ and $d$ which correspond, respectively, to the fraction of edges to be inserted and deleted. This means that $m_i = i*m$ edges are inserted and $m_d = d*m$ edges are deleted, and the flow graph initially has $m'=m-m_i$ edges. The algorithms build (in static mode) the dominator tree for the first $m'$ edges in the original graph file
and then they run in dynamic mode. For $i=d=0$, \algo{sgl} reduces to the iterative algorithm of Cooper et al.~\cite{dom:chk01}, whilst \algo{dsnca} and \algo{dbs} reduce to \algo{snca}. The remaining edges are inserted during the updates. The type of each update operation is chosen uniformly at random, so that there are $m_i$ insertions interspersed with $m_d$ deletions. During this simulation that produces the dynamic graph instance we keep track of the edges currently
present in the graph. If the next operation is a deletion then the edge to be deleted is chosen uniformly at random from the edges in the current graph.

\subsection{Evaluation}
\label{sec:evaluation}

The experimental results for various combinations of $i$ and $d$ are shown in Table~\ref{tab:runningtimes}. The reported running times for a given combination of $i$ and $d$ is the average of the total running time taken to process ten update sequences obtained from different seed initializations of the \texttt{srand} function.
With the exception of \textsf{baydry} with $i=50, d=0$, in all instances \algo{dsnca} and \algo{dbs} are the fastest. In most cases, \algo{dsnca} is by a factor of more than 2 faster than \algo{slt}. \algo{sgl} and \algo{dbs} are much more efficient when there are only insertions ($d=0$), but their performance deteriorates when there are deletions ($d>0$). For the $d>0$ cases, \algo{dbs} and \algo{dsnca} have similar performance for most
instances, which is due to employing the deletion test of Lemma~\ref{lemma:deletion-test}. On the other hand, \algo{sgl} can be even worse than \algo{slt} when $d>0$. For all graphs except \textsf{baydry} (which is extremely sparse) we observe a decrease in the running times for the $i=50,d=50$ case. In this case, many edge updates occur in unreachable parts of the graph. (This effect is more evident in the $i=100,d=100$ case, so we did not include it in Table~\ref{tab:runningtimes}.)
Overall, \algo{dbs} achieves the best performance. \algo{dsnca} is a good choice when there are deletions and is a lot easier to implement.

\clearpage

\bibliographystyle{plain}
\bibliography{ltg}

\ignore{
\clearpage

\appendix

\section{Computing 2-Connected Components in Digraphs}
\label{sec:2-connected}

A strongly connected digraph $G$ is $2$-\emph{edge connected} ($2$-\emph{vertex connected}) if it remains strongly connected after the removal of any edge (vertex).
In~\cite{Italiano2012} it is observed that computing the 2-vertex or 2-edge connected subgraphs of a digraph can be done in $O(mn)$ time, but it is so far unknown if there are more efficient algorithms for this task. Here we consider a related problem, that of computing the 2-connected subgraph containing a specific vertex $r \in V$.

We consider the 2-edge connected case and note that a similar approach works for the 2-vertex connected case as well. A \emph{strong bridge} of $G$ is an edge whose removal increases the number of strongly connected components of $G$. As shown in~\cite{Italiano2012}, $G$ is 2-edge connected if and only it has no strong bridges. Strong bridges can be computed in linear time using the notion of \emph{edge dominators} and \emph{edge postdominators}.

We consider the flow graph $G=(V,E,r)$ and its reversal $G^R=(V,E^{R},r)$, where $E^R = \{ (x,y) \ | \ (y,x) \in E\}$. An edge $(x,y) \in E$ is an edge dominator of a vertex $v$ if every path from $r$ to $v$ in $G$ contains $(x,y)$. An edge $(x,y) \in E$ is an edge postdominator of a vertex $v$ if every path from $r$ to $v$ in $G^R$ contains $(y,x)$. Edge dominators can be computed using a lemma of Tarjan~\cite{st:t:tech}, which states that $(x,y)$ is an edge dominator if and only for every other edge $(w,y)$ entering $y$ ($w \not= x$), $y$ dominates $w$.\footnote{A different lemma is used in~\cite{st:t} to identify edge dominators without the use of (vertex) dominators.} To compute the 2-edge connected subgraph of $r$ we maintain the dominator tree of a dynamic graph $G_r$
and of its reversal $G_r^R$. We use the above lemma to identify the edge dominators in $G_r$ and $G_r^R$ and delete them from both graphs. When this process terminates, the vertices reachable from $r$ in $G_r$ induce the 2-edge connected subgraph of $r$.

\section{Depth-based search (DBS): Alternative deletion algorithm}
\label{sec:dbs-deletion2}

We consider an alternative approach to handle the deletion of $(x,y)$ based on~\cite{SGL}: We find the set $S(y)$ of siblings of $y$ in $D$ that are reachable from $y$ and then compute their new immediate dominators. To find $S(y)$ we execute a depth-first search from $y$ and visit only vertices $w$ with $\mathit{depth}(w) > \mathit{depth}(\mathit{idom(y)})$. The vertices in $S(y)$ are visited during this search and are at depth $\mathit{depth}(\mathit{idom(y)})+1$. To find the new immediate dominators of the vertices in $S(y)$ we need to process the set of edges $E^{-}(y) = \{ (u,v) \in E \ | \ v \in S(y)\}$ entering $S(y)$. The validity of $D$ implies that for every $(u,v) \in E^{-}(y)$, $u$ is a descendant of $d(y)$ in $D$. Therefore we can compute $E^{-}(y)$ by executing a depth-first search from $d(y)$ and visit only vertices $w$ with $\mathit{depth}(w) > \mathit{depth}(\mathit{idom(y)})$. During this search we temporarily set $d(v) = t(v)$, where $T$ is the depth-first search tree. Then we compute the correct dominator tree by inserting the edges in $E^{-}(y) \setminus T$, using the insertion algorithm described above. If $y$ becomes unreachable then we execute the above algorithm for all edges in the set $E^{+}(y) = \{ (u,v) \in E \ | \ y \in \mathit{Dom}(u) \mbox{ and } y \not\in \mathit{Dom}(v)\}$. Again, we can compute $E^{+}(y)$ by executing a depth-first search from $y$ and visit only vertices $w$ with $\mathit{depth}(w) \ge \mathit{depth}(y)$. At each visited vertex $w$, we insert in $E^{+}(y)$ all edges $(w,v)$ such that $\mathit{depth}(v) \le \mathit{depth}(y)$.

\section{A counterexample for the Alstrup-Lauridsen algorithm}
\label{sec:AL}

\ignore{
\begin{figure}[h]
\begin{center}
\scalebox{.6}[.6]{\input{al-counter-example.pstex_t}}
\end{center}
\caption{\label{fig:al-counter-example} A counter example for a key lemma in~\cite{dynamicdominator:AL}.}
\end{figure}
}

\begin{figure*}
\begin{center}
\includegraphics[width=\textwidth, trim = 0mm 0mm 0mm 70mm, clip]{AL.pdf}
\caption{\label{fig:al-counter-example}
A counter example for a key lemma in~\cite{dynamicdominator:AL}.}
\end{center}
\end{figure*}

The algorithm of Alstrup and Lauridsen~\cite{dynamicdominator:AL} uses a depth-based search of the dominator tree (refer to Section~\ref{sec:insertion}) to locate the affected vertices after the insertion of a edge. A key lemma that is used to test if a vertex is affected is states the following:

\begin{lemma}
\label{lemma:al}~\cite{dynamicdominator:AL}
Suppose $x$ and $y$ are reachable vertices in $G$.
A vertex $v$ is affected after the insertion of $(x,y)$ if and only if $\mathit{nca}_D(x,y)$ properly dominates $d(v)$ and there is a path $P$ from $y$ to $v$ such that $\mathit{depth}(d(v)) < \mathit{depth}(w)$ for all $w \in P$ such that $w$ dominates $y$.
\end{lemma}

Figure~\ref{fig:al-counter-example}, however, shows a counterexample for $(x,y)=(r,g)$ and $v=d$. We have $\mathit{nca}_D(x,y)=r$ which properly dominates $d(v)=c$. Also there is a path $P=(g,c,e,d)$, in which $g$ is the only dominator of $y$ and $\mathit{depth}(g) = 4 > \mathit{depth}(d(v))=3$. So $v$ satisfies the premises of Lemma~\ref{lemma:al} but is not affected.
}
\end{document}